\def\nI{n_{\mathrm{I}}}
\def\uo{\omega_0} 
\def\un{\omega_n} 
\def\SINR{\mathsf{SINR}}
\def\Pt{P_{\mathrm{t}}}
\def\Gn{G_{n}}
\def\Go{G_{0}}
\def\Gtn{G_{\mathrm{t},n}}
\def\Gto{G_{\mathrm{t},0}}
\def\Gr{G_{\mathrm{r}}}
\def\PsiG{\Psi}
\def\Avis{\mathcal{A}_{\mathrm{vis}}}
\def\BPPg{\Phi}
\def\ag{a}
\def\Ag{\mathcal{A}}
\def\Agvis{\Ag_{\mathrm{vis}}}
\def\Pcov{P_{\mathrm{cov}}}
\def\rmin{r_{\mathrm{min}}(\phi)}
\def\rmax{r_{\mathrm{max}}(\phi)}
\def\romax{r_{\mathrm{vis,max}}}
\def\PsuccV{p_{\mathrm{vis}}}
\def\PsuccI{p_{\mathrm{int}}}
\def\fc{f_{\mathrm{c}}}
\def\phiinv{\phi_{\mathrm{inv}}}
\def\E{\mathbb{E}}
\def\P{\mathbb{P}}
\def\re{r_{\mathrm{E}}}
\newtheorem{thm}{Theorem}
\newtheorem{lem}{Lemma}
\newtheorem{cor}{Corollary}
\newtheorem{rem}{Remark}
\xpatchcmd{\proof}{\hskip\labelsep}{\hskip5\labelsep}{}{}
\newcommand{\vast}{\bBigg@{4}}
\newcommand{\Vast}{\bBigg@{5}}
\begin{document}
\title{Modeling and Analysis of GEO Satellite Networks}

\author{Dong-Hyun Jung, Hongjae Nam, Junil Choi, and David J. Love\\
\thanks{D.-H. Jung is with the Satellite Communication Research Division, Electronics and Telecommunications Research Institute, Daejeon, 34129, South Korea (e-mail: dhjung@etri.re.kr).}
\thanks{H. Nam and D. J. Love are with the School of Electrical and Computer Engineering, Purdue University, West Lafayette, IN 47907 USA (e-mail: nam86@purdue.edu; djlove@purdue.edu).}
\thanks{J. Choi is with the School of Electrical Engineering, KAIST, Daejeon, 34141, South Korea (e-mail: junil@kaist.ac.kr).}

\vspace{-0.7cm}
}
\maketitle

\begin{abstract}
The extensive coverage offered by satellites makes them effective in enhancing service continuity for users on dynamic airborne and maritime platforms, such as airplanes and ships. In particular, geosynchronous Earth
orbit (GEO) satellites ensure stable connectivity for terrestrial users due to their stationary characteristics when observed from Earth. This paper introduces a novel approach to model and analyze GEO satellite networks using stochastic geometry. 
We model the distribution of GEO satellites in the geostationary orbit according to a binomial point process (BPP) and examine satellite visibility depending on the terminal's latitude. Then, we identify potential distribution cases for GEO satellites and derive case probabilities based on the properties of the BPP. We also obtain the distance distributions between the terminal and GEO satellites and derive the coverage probability of the network. We further approximate the derived expressions using the Poisson limit theorem. Monte Carlo simulations are performed to validate the analytical findings, demonstrating a strong alignment between the analyses and simulations. The simplified analytical results can be used to estimate the coverage performance of GEO satellite networks by effectively modeling the positions of GEO satellites.

\textbf{\emph{Index terms}} --- Satellite communications, coverage analysis, stochastic geometry, GEO satellite networks.
\\
\end{abstract}

\IEEEpeerreviewmaketitle

\vspace{-0.6cm}

\section{Introduction}\label{sec:Intro}
\IEEEPARstart{S}{atellite}
communications have recently been utilized to offer worldwide internet services by taking advantage of their extensive coverage.
In this regard, the 3rd Generation Partnership Project (3GPP) has been working toward the integration of terrestrial networks (TNs) and non-terrestrial networks (NTNs) since Release 15 [\ref{Ref:3GPP_38.811}], [\ref{Ref:3GPP_38.821}].
The utilization of non-terrestrial entities, such as geosynchronous Earth orbit (GEO) satellites, low Earth orbit (LEO) satellites, and high-altitude platforms, presents an opportunity to extend communication services beyond terrestrial boundaries. With this advancement, aerial users like drones, airplanes, and vehicles involved in urban air mobility could benefit from enhanced connectivity. 
To facilitate the integration between the TNs and NTNs, 3GPP has been addressing adding features to the standard to support NTNs with existing TNs~[\ref{Ref:3GPP_38.821}].

The different altitudes and stationary nature between GEO and LEO satellites result in distinct characteristics in terms of communication services and orbital configurations.
In general, LEO satellites could offer greater throughput due to the lower path loss compared to that of GEO satellites, whereas the high-speed movement of LEO satellites leads to frequent inter-satellite handovers. 
On the contrary, GEO satellites, being viewed as stationary, could maintain stable connections with ground users at the cost of relatively lower throughput.
In the design of LEO constellations, Walker Delta constellations with various inclinations are utilized to achieve uniform coverage near the equator, while Walker Star constellations are employed to provide services to the polar regions [\ref{Ref:Su-22}]. In contrast to the LEO satellites, GEO satellites are positioned exclusively in the geostationary orbit on the equatorial plane to maintain a stationary view from Earth.

\subsection{Related Works}\label{sec:Intro_rel_work}

The system-level performance of LEO satellite networks has been recently evaluated using stochastic geometry where the positions of LEO satellites are effectively modeled by spatial point processes. Binomial point processes (BPPs) have been widely used to model the distributions of LEO satellites because the total number of LEO satellites is deterministic [\ref{Ref:Okati1}]-[\ref{Ref:Jung1}].
The initial work [\ref{Ref:Okati1}] provided BPP-based coverage and rate analyses and showed that the BPP satisfactorily models deterministic Walker constellations.
The distance distribution between the nearest points on different concentric spheres was obtained in [\ref{Ref:Talgat1}]. With this distribution, the coverage probability of LEO satellite communication networks was derived in [\ref{Ref:Talgat2}], specifically examining the role of gateways as relays between the satellites and users. 
The ergodic capacity and coverage probability of cluster-based LEO satellite networks were evaluated in [\ref{Ref:Jung1}] considering two different types of satellite clusters.
However, the BPP-based analytical results include highly complex terms, which are not fairly tractable to evaluate network performance.

Instead of the BPPs, Poisson point processes (PPPs) could be used to approximately model the LEO satellite constellations using the Poisson limit theorem when a large number of satellites exists [\ref{Ref:Jung2}]-[\ref{Ref:Park}]. Both the BPP- and PPP-based performance analyses were carried out in [\ref{Ref:Jung2}] under the shadowed-Rician fading in terms of the outage probability and system throughput.
The downlink coverage probability of LEO satellite networks was derived in [\ref{Ref:Al-Hourani1}] considering a recent satellite-to-ground path loss model and an elevation angle-dependent line-of-sight (LOS) probability. The altitude of satellite constellations was optimized in [\ref{Ref:Al-Hourani2}] to maximize the downlink coverage probability.
In [\ref{Ref:Okati2}], a non-homogenous PPP was used to model the non-uniform distribution of LEO satellites across different latitudes. More tractable results for the coverage probability were provided in [\ref{Ref:Park}] where the density of LEO satellites was also optimized. 

The link-level performance of GEO satellite systems has been also analyzed from various perspectives [\ref{Ref:Abdu}]-[\ref{Ref:Fortes}]. The earlier studies [\ref{Ref:Abdu}], [\ref{Ref:Abdu2}] introduced a flexible resource allocation design for GEO satellite systems to maximize spectrum utilization. The primary focus was on minimizing the number of frequency carriers and transmit power required to meet the demands of multi-beam scenarios. The coverage of GEO satellite systems was enhanced in [\ref{Ref:Jia}] using weighted cooperative spectrum sensing among multiple GEO satellites via inter-satellite links. In [\ref{Ref:Khan}], reflecting intelligent surfaces were integrated into a downlink GEO scenario where the joint power allocation and phase shift design problem was efficiently solved. The interference analyses between a GEO satellite and a LEO satellite were conducted in [\ref{Ref:CS.Park}] to assess the effectiveness of an exclusive angle strategy in mitigating in-line interference. In [\ref{Ref:Fortes}], the interference analysis between a single GEO satellite and multiple LEO satellites was introduced based on the probability density functions (PDFs) of the LEO satellites' positions. Although the above works [\ref{Ref:Okati1}]-[\ref{Ref:Fortes}] have successfully investigated diverse aspects of satellite communication systems, they have primarily focused on the satellites in non-geostationary orbits or a single GEO satellite.

\subsection{Motivation and Contributions}\label{sec:Intro_motiv_cont}
GEO satellites appear stationary when observed from Earth because they move in the same direction as the Earth's rotation, and their orbits are positioned on the equatorial plane, i.e., the inclination of zero degrees.\footnote{ While GEO satellites could have inclinations greater than or equal to zero degrees, our paper focuses specifically on geostationary orbit satellites for communication purposes, which are characterized by inclinations close to zero degrees.
}  
Due to this inherent property of the geostationary orbit, modeling the positions of multiple GEO satellites is different from the techniques for modeling multiple LEO satellites but has not been addressed before to the best of our knowledge.                                               
To fill this gap, we investigate the fundamental framework of GEO satellite networks by leveraging the orbital characteristics of the GEO satellites.
The key contributions of this paper are summarized as follows.
\begin{itemize}
    \item \textbf{Modeling and analysis of GEO satellite networks:} The orbital characteristics of the GEO satellites lead to dissimilar dimensional geometry. The terminals are on the surface of Earth (3D distribution), while the GEO satellites are located on the equatorial plane (2D distribution). This makes the critical differences from the works [\ref{Ref:Okati1}]-[\ref{Ref:Park}] that considered LEO constellations.
    Hence, terminals at different latitudes experience unequal satellite visibility, resulting in performance gaps. 
    Considering these characteristics, we provide a novel approach to model and analyze GEO satellite networks based on stochastic geometry.
    Specifically, we distribute GEO satellites in the geostationary orbit according to a BPP and then analyze satellite visibility depending on the terminal's latitude. We identify the possible distribution cases for the GEO satellites and derive the probabilities of these cases based on the properties of the BPP. We also obtain the distance distributions between the terminal and GEO satellites and then derive the coverage probability using these distributions.
    \item \textbf{Poisson limit theorem-based approximation:} We approximate the satellite distribution as a PPP using the Poisson limit theorem. With this approach, the derived satellite-visible probability, distance distributions, and coverage probability are further simplified. Using the two-line element\footnote{A two-line element set is a data format encoding the list of orbital elements of a satellite at a given epoch time, which is publicly provided by the North American Aerospace Defense Command. Based on the two-line element set, the position and velocity of the satellite could be predicted by using a simplified general perturbation model, e.g., SGP4 [\ref{Ref:SGP4}].} dataset of the currently active GEO satellites, we compare the average number of visible satellites between the actual distribution and the BPP model. Additionally, we explore the performance gap between the BPP- and PPP-based satellite distributions, thereby identifying the conditions under which the Poisson limit theorem applies to the distribution of GEO satellites. 
\end{itemize}

The rest of this paper is organized as follows.
In Section~\ref{sec:System_model}, the network model for a GEO satellite communication network is described. In Section~\ref{sec:Math}, the orbit visibility and distance distributions are analyzed. In Section~\ref{sec:Coverage}, the analytical expressions of the coverage probability are derived.
In Section~\ref{sec:Sim_results}, simulation results are provided to validate our analysis, and conclusions are drawn in Section \ref{sec:Conclusions}.

\emph{Notation:}
$\P[\cdot]$ indicates the probability measure, and $\E[\cdot]$ denotes the expectation operator.
The complement of a set $\mathcal{X}$ is $\mathcal{X}^{\mathrm{c}}$.
$\binom{n}{k}$ denotes the binomial coefficient. $\mathrm{Bin}(n,p)$ denotes the binomial distribution with the number of trials $n$ and the success probability $p$.
The cumulative distribution function (CDF) and the PDF of a random variable $X$ are $F_X(x)$ and $f_x(x)$, respectively.
$\Gamma(\cdot)$ is the Gamma function, and the Pochhammer symbol is defined as $(x)_n=\Gamma(x+n)/\Gamma(x)$.
The Euclidean norm of a vector $\mathbf{x}$ is $||\mathbf{x}||$.
The Lebesgue measure of a region $\mathcal{X}$ is $|\mathcal{X}|$, which represents the volume of $\mathcal{X}$.
The unit step function is $u(\cdot)$, and 
the Dirac delta function is $\delta(\cdot)$.

\begin{figure}
\begin{center}
\includegraphics[width=\columnwidth]{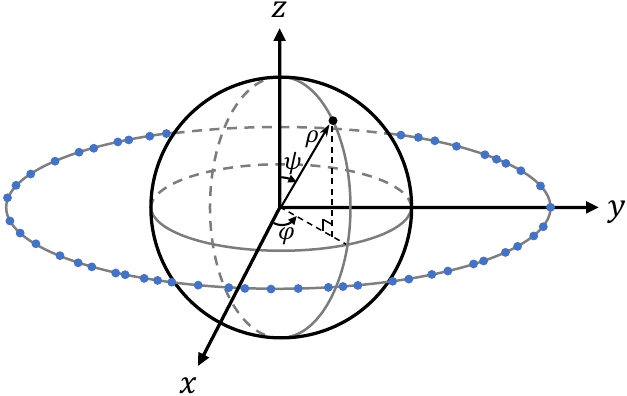}
\end{center}
\setlength\abovecaptionskip{.25ex plus .125ex minus .125ex}
\setlength\belowcaptionskip{.25ex plus .125ex minus .125ex}
\caption{An example of the GEO satellite distribution. The $xy-$plane represents the equatorial plane. The blue spheres and the gray outer circle indicate the GEO satellites and the geostationary orbit, respectively. }
\label{Fig:GEONet}
\end{figure}

\begin{figure*}[!t]
    \begin{threeparttable}
    \centering
    \renewcommand*{\arraystretch}{1.4}
    \begin{tabular}{>{\centering\arraybackslash}m{2.8cm}>{\centering\arraybackslash}m{7.1cm}>{\centering\arraybackslash}m{7.1cm} @{}m{0pt}@{}}
    \toprule
      \textbf{Latitude} & 
      $\qquad\qquad$\textbf{Orbit visibility} & 
      \textbf{Horizontal view} \\
     \midrule
     $\phi=0$ & 
     \includegraphics[width=0.75\columnwidth]{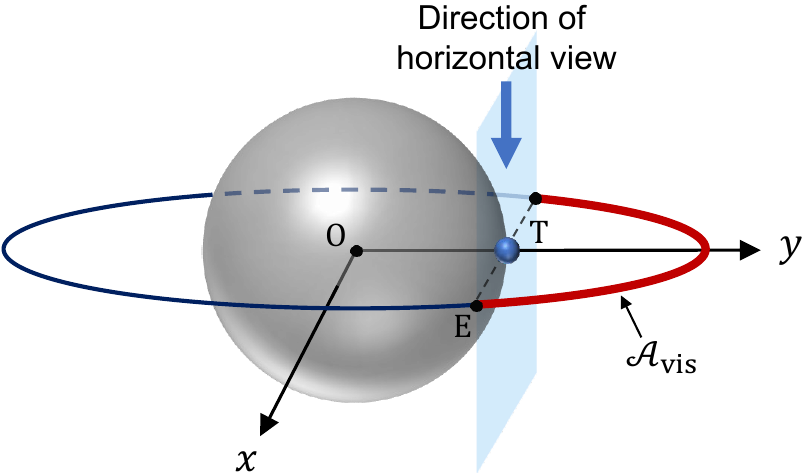} &
     \includegraphics[width=0.600\columnwidth]{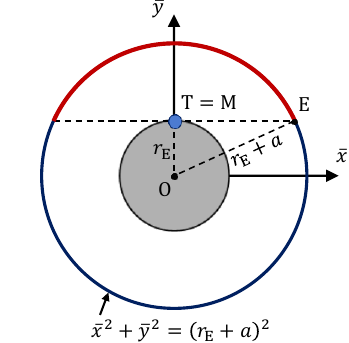} \\
     \midrule
     $0<\phi < \phiinv$ & 
     \includegraphics[width=0.75\columnwidth]{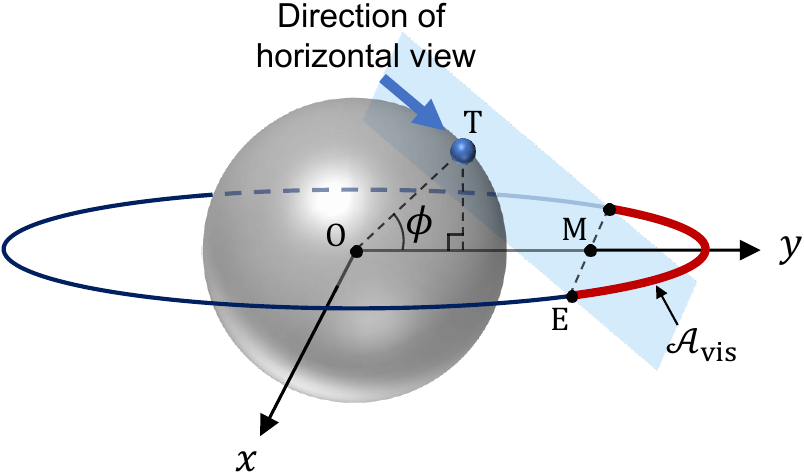} & 
     \includegraphics[width=0.600\columnwidth]{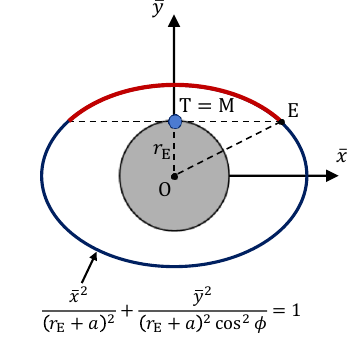} \\
     \midrule
     $\phi = \phiinv$ &
     \includegraphics[width=0.75\columnwidth]{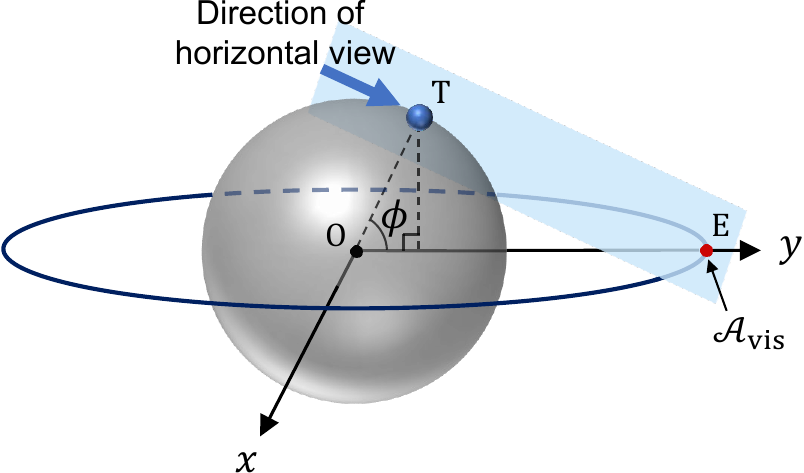} &      \includegraphics[width=0.600\columnwidth]{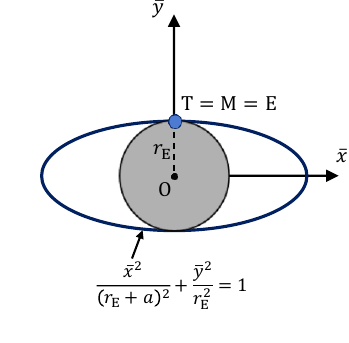}\\
     \bottomrule
     \end{tabular}
    \end{threeparttable}
     \caption{Latitude-dependent orbit visibility and horizontal view. The longitude of the terminal is set to $90$ degrees. The gray sphere and the dark blue circle on the $xy-$plane are Earth and the geostationary orbit, respectively, where the points $\mathrm{O}$ and $\mathrm{T}$ represent the Earth center and the terminal position, respectively. 
      The visible arc of the orbit is indicated as the red curve where the points $\mathrm{E}$ and $\mathrm{M}$ are an endpoint of the arc and the midpoint between two endpoints, respectively. The light blue rectangle represents the horizontal plane at the terminal.}\label{Fig:Orbit_visibility}
\end{figure*}

\vspace{0.7cm}

\section{Network Model}\label{sec:System_model}
We consider a downlink GEO satellite network where $N$ GEO satellites at altitude $\ag$ serve ground terminals.
It is notable that for the geostationary orbit, unlike LEOs, the altitude $\ag$ is consistently specified as 35,786 kilometers.
We assume that the positions of the GEO satellites $\mathbf{x}_n$, $n\in\{0,1,\!\cdots\!,N-1\}$, are randomly determined according to a homogeneous BPP $\BPPg =\{\mathbf{x}_0, \mathbf{x}_1, \cdots, \mathbf{x}_{N-1}\}$ in the circular geostationary orbit $\Ag$ as shown in Fig.~\ref{Fig:GEONet}.
This orbit can be expressed with spherical coordinates as $\Ag=\left\{\rho=\re+\ag, \psi=\frac{\pi}{2}, 0 \le\varphi\le 2\pi\right\}$, where  $\rho$, $\psi$, and $\varphi$ are the radial distance, polar angle, and azimuthal angle, respectively, and $\re$ is the Earth's radius.
We focus on a typical terminal located at arbitrary latitude $\phi$ and longitude~$\theta$ where its position is given by
\begin{align}
    \mathbf{t}
    =\begin{bmatrix}
    t_x \\
    t_y \\
    t_z
    \end{bmatrix}
    =\begin{bmatrix}
        \re \cos{\phi}\cos{\theta} \\ 
        \re \cos{\phi}\sin{\theta} \\ 
        \re \sin{\phi}
    \end{bmatrix}.
\end{align}

Because the satellites above the horizontal plane can be observed from Earth, a part of the geostationary orbit is only visible to the terminal, which we call the \textit{visible arc} and denote it by $\Agvis$.
We assume that the terminal is served by the nearest satellite in $\Agvis$, and the other satellites in $\Agvis$ become interfering nodes.
Let $\BPPg(\mathcal{X})$ denote the number of satellites distributed in region $\mathcal{X}$ according to the BPP $\BPPg$. Then $\BPPg(\Agvis)$ is the number of visible satellites positioned in the visible arc.
For notational simplicity, we let the index $n=0$ denote the serving satellite and the indices $n=1,\cdots,\BPPg(\Agvis)-1$ represent the interfering satellites, while the remaining indices are for the invisible satellites, which are irrelevant to the typical terminal.

The satellites adopt directional beamforming to compensate for large path losses at the receivers. 
For analytical tractability, we assume that the boresight of the serving satellite's beam is directed toward the target terminals, while the beams of other satellites are fairly misaligned [\ref{Ref:Okati1}], [\ref{Ref:Park}].
This assumption is well motivated in GEO scenarios because GEO satellites appear stationary from Earth, meaning that their beams remain fixed within a specific ground area.
With this property, the beams of GEO satellites are carefully designed to avoid overlapping the beams of the existing GEO satellites, thereby mitigating inter-satellite interference. 
Hence, the effective antenna gain $G_n$, $n\in\{0,1,\cdots,N-1\}$, is given by
\begin{align}
    G_n = 
    \begin{cases}
        \Gto\Gr, & \mbox{if  } n=0,\\
        \Gtn\Gr, & \mbox{otherwise }     
    \end{cases}
\end{align}
where $\Gtn$ is the transmit antenna gains of the satellites, and $\Gr$ is the receive antenna gain of the terminal.

The path loss between the terminal and the satellite at $\mathbf{x}_n$ is given by $\ell(\mathbf{x}_n)= \left(\frac{c} {4\pi\fc}\right)^2 R_n^{-\alpha}$ where $R_n=||\mathbf{x}_n-\mathbf{t}||$ is the distance, $c$ is the speed of light, $f_{\mathrm{c}}$ is the carrier frequency, and $\alpha$ is the path-loss exponent.
We assume the satellite channels experience Nakagami-$m$ fading, which effectively captures the LOS property of satellite channels [\ref{Ref:Park}], [\ref{Ref:Jung3}]. The Nakagami-$m$ fading model can reflect various channel circumstances by varying the Nakagami parameter $m$. For example, the Nakagami-$m$ distribution becomes the Rayleigh distribution when $m=1$, and the Rician-$K$ distribution when $m=\frac{(K+1)^2}{2K+1}$. The CDF of the channel gain of the Nakagami-$m$ fading model is given by $F_{h_n}(x)=1-e^{-m x}\sum_{k=0}^{m-1}\frac{(mx)^k}{k!}$.

Since the satellites are distributed according to the BPP, there can be three distribution cases for the visible satellites as follows.
\begin{itemize}
    \item Case 1: $\BPPg(\Agvis)=0$, i.e., no visible satellite exists. There are no serving and interfering satellites.
    \item Case 2: $\BPPg(\Agvis)=~1$, i.e.,  one visible satellite exists. The only visible satellite functions as the serving satellite without any interfering satellite.
    \item Case 3: $\BPPg(\Agvis)>1$, i.e., more than one visible satellite exist. Both the serving and at least one interfering satellites exist.
\end{itemize}
Considering these cases, the received signal-to-interference-plus-noise ratio ($\SINR$) at the typical terminal is given by
\begin{align}\label{eq:SINR}
    \SINR =
        \begin{cases} 
         0, & \mbox{if  } \BPPg(\Agvis) = 0 ,\\
        \frac{\Pt \Go h_0 \ell(\mathbf{x}_0)}{N_0 W}, & \mbox{if  } \BPPg(\Agvis) = 1,\\
        \frac{\Pt \Go h_0 \ell(\mathbf{x}_0)}{I + N_0 W}, & \mbox{if  } \BPPg(\Agvis) > 1
        \end{cases}
\end{align}
where $\Pt$ is the transmit power assuming all satellites transmit with the same power, and $I=\sum_{n=1}^{\BPPg(\Agvis)-1} \Pt \Gn h_n \ell(\mathbf{x}_n)$ is the aggregated inter-satellite interference.

\begin{figure}[!t]
\begin{center}
\includegraphics[width=\columnwidth]{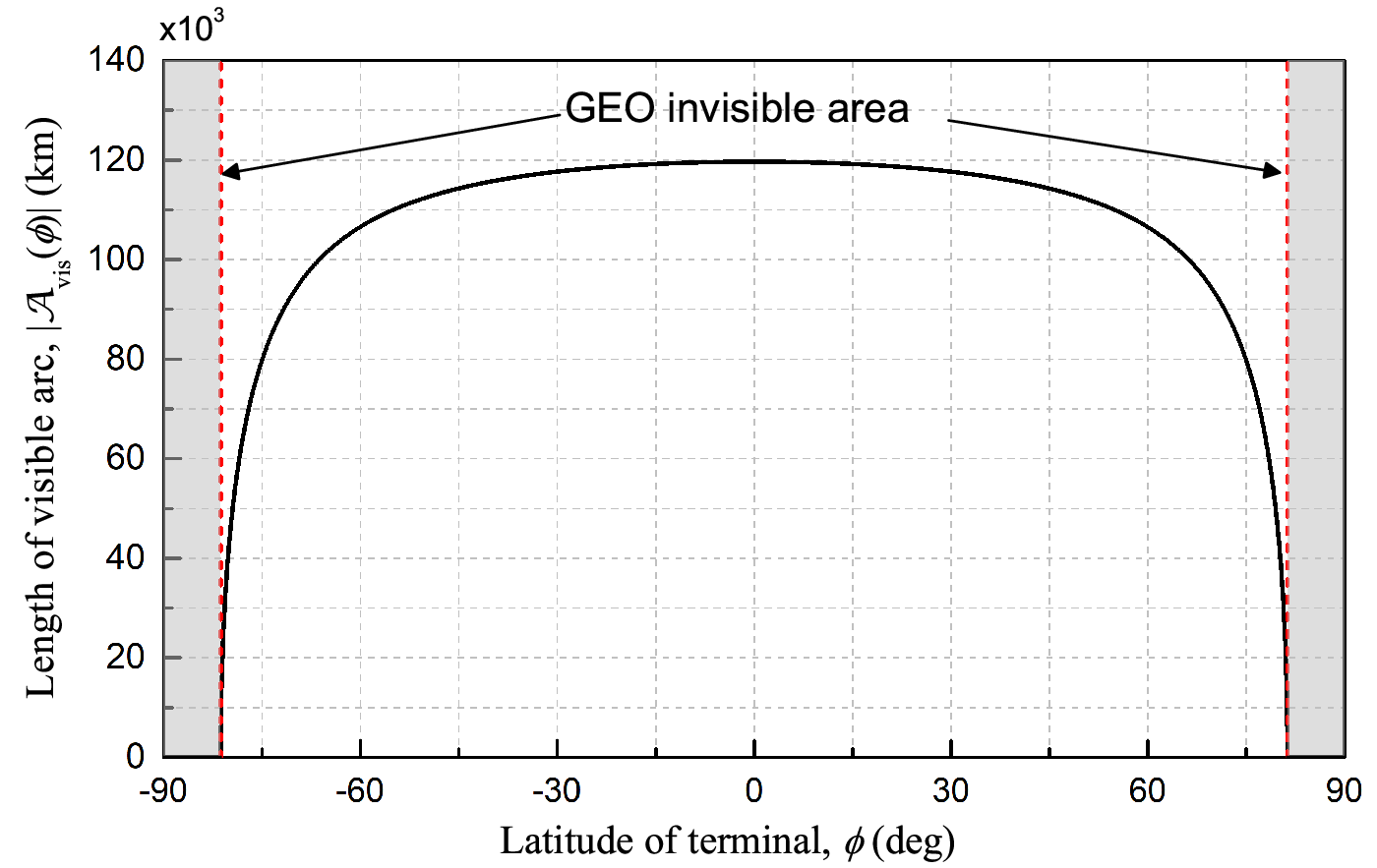}
\end{center}
\setlength\abovecaptionskip{.25ex plus .125ex minus .125ex}
\setlength\belowcaptionskip{.25ex plus .125ex minus .125ex}
\caption{Latitude-dependent length of the visible arc, $|\Agvis(\phi)|$.}
\label{Fig:Avis}
\end{figure}

\section{Mathematical Preliminaries} \label{sec:Math}
\subsection{Satellite Visibility Analyses}\label{sec:visibility_anal}
It is worth noting that the length of the visible arc $\Agvis(\phi)$ in the geostationary orbit highly depends on the terminal's latitude as shown in Fig. \ref{Fig:Orbit_visibility}. 
For example, when the terminal is placed on the equator, i.e., $\phi=0$, the visible arc, depicted as the red curve, is the longest. As the latitude increases, the visible arc shrinks and finally vanishes at the latitude of 
\begin{align}
    \phiinv=\cos^{-1}\left({\frac{\re}{\re+\ag}}\right) \approx 81.3 \text{ degrees.}\nonumber
\end{align} Based on this observation, the length of the visible arc is obtained in the following lemma.

\begin{lem}\label{lem:len_vis_arc}
    The length of the visible arc, i.e., $|\Agvis(\phi)|$, is given by
    \begin{align}
        |\Agvis(\phi)|= 2\left(\re+\ag\right)\sin^{-1}\left({\sqrt{1- \frac{\sec^2\phi}{(1+\ag/\re)^{2}}}}\right)
    \end{align}
    for $|\phi| < \phiinv$, and $|\Agvis(\phi)|=0$ otherwise.
\end{lem}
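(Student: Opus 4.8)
The plan is to characterize visibility geometrically and then reduce the arc-length computation to finding the angular half-width of the visible portion of the orbit. First I would fix the notion of visibility: a satellite at position $\mathbf{x}$ is observable from the terminal precisely when it lies on or above the local horizontal plane, i.e., the plane tangent to Earth at $\mathbf{t}$. Since the outward normal to this plane points along $\mathbf{t}$, visibility is equivalent to $(\mathbf{x}-\mathbf{t})\cdot\mathbf{t}\ge 0$, which simplifies to $\mathbf{x}\cdot\mathbf{t}\ge\|\mathbf{t}\|^2=\re^2$.

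Next I would parameterize the orbit by its azimuthal angle, writing $\mathbf{x}(\varphi)=\left((\re+\ag)\cos\varphi,\,(\re+\ag)\sin\varphi,\,0\right)$, and substitute the terminal coordinates. The $z$-components contribute nothing because the orbit lies entirely in the equatorial plane, and the remaining terms collapse via the angle-difference identity to $\mathbf{x}(\varphi)\cdot\mathbf{t}=(\re+\ag)\,\re\cos\phi\,\cos(\varphi-\theta)$. Hence the visibility condition becomes $\cos(\varphi-\theta)\ge\frac{\sec\phi}{1+\ag/\re}$, a clean inequality in the single variable $\beta:=\varphi-\theta$.

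I would then read off the visible arc directly. The set of $\beta$ satisfying $\cos\beta\ge c_0$, with $c_0:=\frac{\sec\phi}{1+\ag/\re}$, is the symmetric interval $[-\beta_{\max},\beta_{\max}]$ of half-width $\beta_{\max}=\cos^{-1}(c_0)$, provided $c_0\le 1$. The feasibility condition $c_0\le 1$ rearranges to $\cos\phi\ge\re/(\re+\ag)$, i.e. $|\phi|\le\phiinv$, recovering exactly the threshold latitude; for $|\phi|>\phiinv$ the orbit lies wholly below the horizon and the arc is empty. Multiplying the total angular extent $2\beta_{\max}$ by the orbital radius $\re+\ag$ gives $|\Agvis(\phi)|=2(\re+\ag)\cos^{-1}(c_0)$, and a final rewrite using $\cos^{-1}(c_0)=\sin^{-1}\!\left(\sqrt{1-c_0^2}\right)$, valid for $c_0\in[0,1]$, together with $c_0^2=\sec^2\phi/(1+\ag/\re)^2$, yields the stated form.

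The bookkeeping here is light, so the only points requiring genuine care are justifying the equivalence of ``visible'' with the half-space condition $\mathbf{x}\cdot\mathbf{t}\ge\re^2$, and handling the boundary cleanly: checking that the feasibility threshold $c_0\le 1$ matches $\phiinv$ and that the $\cos^{-1}\!\to\sin^{-1}$ conversion is legitimate over the entire range $|\phi|<\phiinv$. Everything else is a direct trigonometric reduction, and no appeal to the BPP structure is needed since the lemma concerns only the deterministic geometry of the orbit.
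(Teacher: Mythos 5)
Your proof is correct, and it reaches the stated formula by a genuinely different route from the paper. The paper argues pictorially: viewed from the terminal's horizontal plane, the orbit projects to an ellipse with semi-axes $\re+\ag$ and $(\re+\ag)\cos\phi$; intersecting that ellipse with the horizon line $\bar{y}=\re$ gives the half-chord $\overline{\mathrm{EM}}=\sqrt{(\re+\ag)^2-\re^2\sec^2\phi}$, and the central half-angle is then $\angle\mathrm{EOM}=\sin^{-1}\bigl(\overline{\mathrm{EM}}/(\re+\ag)\bigr)$, yielding the arc length $2(\re+\ag)\angle\mathrm{EOM}$. You instead encode visibility as the half-space condition $\mathbf{x}\cdot\mathbf{t}\ge\re^2$, parameterize the orbit by azimuth, and reduce to $\cos(\varphi-\theta)\ge c_0$ with $c_0=\sec\phi/(1+\ag/\re)$, so the visible set is an interval of angular width $2\cos^{-1}(c_0)$ and the identity $\cos^{-1}(c_0)=\sin^{-1}\bigl(\sqrt{1-c_0^2}\bigr)$ recovers the paper's form. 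Your computation checks out (the dot product does collapse to $(\re+\ag)\re\cos\phi\cos(\varphi-\theta)$, and $\cos\phi>0$ throughout $|\phi|<\phiinv<\pi/2$ so the division is safe). What your version buys is rigor and self-containment: it avoids having to justify the projected-ellipse picture and the implicit claim that distances along the $\bar{x}$-axis are preserved under the projection, and the cutoff latitude $\phiinv$ together with the vanishing of the arc for $|\phi|\ge\phiinv$ falls out automatically from the feasibility condition $c_0\le 1$ rather than being handled by a separate geometric observation. What the paper's version buys is brevity and a direct correspondence with the labeled points $\mathrm{E}$, $\mathrm{O}$, $\mathrm{M}$ in its figure.
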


\begin{proof}[Proof:\nopunct]
As shown in Fig. \ref{Fig:Orbit_visibility}, if $\phi=0$, the geostationary orbit from the terminal's horizontal view becomes the circle with the radius of $\re+\ag$. However, if $\phi$ increases, the orbit can be seen as the ellipse whose semi-major and minor axes are $\re+\ag$ and $(\re+\ag)\cos\phi$, respectively. Hence, we can obtain the equation of the ellipse as
\begin{align}\label{eq:ellipse}
    \frac{\bar{x}^2}{(\re+\ag)^2}+\frac{\bar{y}^2}{(\re+\ag)^2\cos^2\phi}=1
\end{align}
where $\bar{x}$ and $\bar{y}$ are the projected axes observed from the horizontal view.
By substituting $\bar{y}=\re$ into \eqref{eq:ellipse}, we can obtain the length between $\mathrm{E}$ and $\mathrm{M}$ on the $xy-$plane as
\begin{align}\label{eq:EM}
    \overline{\mathrm{EM}}=\sqrt{(\re+\ag)^2-\re^2\sec^2\phi}.
\end{align}
Since $\overline{\mathrm{EO}}=\re+\ag$, we have
\begin{align}\label{eq:angleEOM}
    \angle{\mathrm{EOM}} = \sin^{-1}\left(\frac{\overline{\mathrm{EM}}}{\overline{\mathrm{EO}}}\right) = \sin^{-1}\left({\sqrt{1- \frac{\sec^2\phi}{(1+\ag/\re)^{2}}}}\right).
\end{align}
Using this angle, we finally obtain the length of the visible arc as $|\Agvis(\phi)|=2(\re+\ag)\angle{\mathrm{EOM}}$ for $\phi>0$. For $\phi<0$, the length can be readily obtained because the geometry is symmetric about the $xy-$plane, which completes the proof.
\end{proof}

\begin{rem}\label{rem:AvisInc}
    The length of the visible arc $\Agvis(\phi)$ is inversely proportional to the absolute value of the latitude $\phi$, i.e., $\Agvis(\phi) \propto \frac{1}{|\phi|}$, for $|\phi| \le \phiinv\approx 81.3$ degrees. This is because as $|\phi|$ increases from $0$ to $\phiinv$, $\sec\phi$ increases from $1$ to $1+\frac{\ag}{\re}$, resulting in a decrease in $\Agvis(\phi)$.    
    Hence, the maximum length of the visible arc is achieved when the terminal is located at the equator, i.e., $\phi=0$, and has a value of $2(\re+a)\cos^{-1}\left(\frac{\re}{\re+\ag}\right) \approx 119,657$ km. 
\end{rem}

\begin{rem}\label{rem:pvis}
    Based on the properties of the BPP, the number of satellites in the visible arc follows the binomial distribution with the success probability [\ref{Ref:Jung2}]
    \begin{align}\label{eq:Psucc_vis}
    \PsuccV=\frac{|\Agvis(\phi)|}{|\Ag|}=\frac{1}{\pi} \sin^{-1}\left({\sqrt{1- \frac{\sec^2\phi}{(1+\ag/\re)^{2}}}}\right)
    \end{align}
    for $|\phi| < \phiinv$, and $\PsuccV=0$ otherwise. 
    Thus, the average number of visible satellites is given by $\E[\BPPg(\Agvis)]=N \PsuccV$.
\end{rem}

Fig. \ref{Fig:Avis} shows the length of the visible arc depending on the latitude $\phi$. As expected in Remark~\ref{rem:AvisInc}, the length has its maximum $119,657$ km at $\phi=0$ degrees, decreases with $|\phi|$, and vanishes at  $|\phi| \approx 81.3$ degrees. This tendency implies that terminals near the equator are more likely to see many satellites, resulting in better satellite visibility, while those in the polar region, especially $|\phi|> 81.3$ degrees, are rarely in the coverage of GEO satellites.

\begin{figure*}[!t]
\centering
\includegraphics[width=2.1\columnwidth]{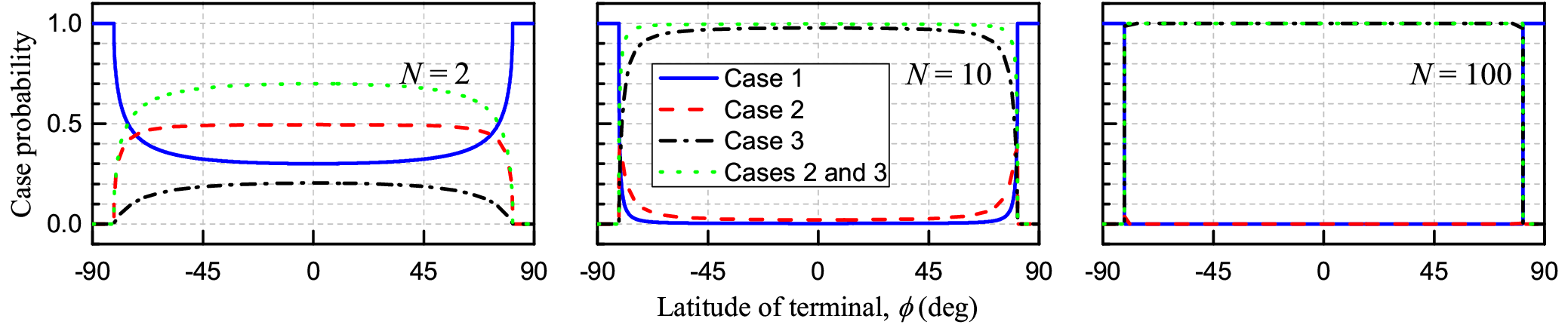}
\caption{Case probabilities versus the latitude of the terminal $\phi$ for $N=\{2,10,100\}$.}
\label{Fig:Pcases}
\end{figure*}

\begin{lem}\label{lem:Pcases}
    The probabilities for the visible satellite distribution of Cases 1, 2, and 3 are given by
    \begin{align}\label{eq:Pcase1}
        &\P[\BPPg(\Agvis)=0]= (1-\PsuccV)^{N},\nonumber\\
        &\P[\BPPg(\Agvis)=1]= N \PsuccV (1-\PsuccV)^{N-1},\nonumber\\
        &\P[\BPPg(\Agvis)>1]=1-(1-\PsuccV)^{N}-N \PsuccV (1-\PsuccV)^{N-1},
    \end{align}
    respectively.
\end{lem}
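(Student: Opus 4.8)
The plan is to reduce the entire statement to the binomial probability mass function, since the essential modeling work has already been carried out in Remark~\ref{rem:pvis}. The key fact I would invoke is that, because $\BPPg$ is a homogeneous BPP consisting of $N$ points placed independently and uniformly on the orbit $\Ag$, each satellite lands inside the visible arc $\Agvis$ independently with the same probability $\PsuccV = |\Agvis(\phi)|/|\Ag|$. Consequently the count $\BPPg(\Agvis)$ is a sum of $N$ i.i.d.\ Bernoulli$(\PsuccV)$ indicators, so that $\BPPg(\Agvis) \sim \mathrm{Bin}(N,\PsuccV)$. Rather than rederive this, I would simply cite Remark~\ref{rem:pvis}, where it is already established.

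Given the binomial count, I would evaluate the PMF $\P[\BPPg(\Agvis)=k] = \binom{N}{k}\PsuccV^{k}(1-\PsuccV)^{N-k}$ at the two relevant points. Setting $k=0$ makes the coefficient $\binom{N}{0}=1$ and the factor $\PsuccV^{0}=1$, so the expression collapses to $(1-\PsuccV)^{N}$, which is exactly the Case~1 probability. Setting $k=1$ gives $\binom{N}{1}\PsuccV(1-\PsuccV)^{N-1}=N\PsuccV(1-\PsuccV)^{N-1}$, which is the Case~2 probability.

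For Case~3, instead of summing the tail $\sum_{k=2}^{N}\binom{N}{k}\PsuccV^{k}(1-\PsuccV)^{N-k}$ directly, I would exploit the fact that the three cases $\{\BPPg(\Agvis)=0\}$, $\{\BPPg(\Agvis)=1\}$, and $\{\BPPg(\Agvis)>1\}$ are mutually exclusive and exhaustive. Hence $\P[\BPPg(\Agvis)>1] = 1 - \P[\BPPg(\Agvis)=0] - \P[\BPPg(\Agvis)=1]$, and substituting the two expressions just obtained yields the stated formula without any series manipulation. There is essentially no analytical obstacle in this lemma: it is a direct specialization of the binomial distribution, and the only step requiring any care is the independence-and-uniformity justification for the binomial count, which is supplied in advance by Remark~\ref{rem:pvis}.
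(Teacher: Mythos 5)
Your proposal is correct and follows essentially the same route as the paper: the paper's Appendix~A likewise writes down the binomial probability $\binom{N}{q}\PsuccV^q(1-\PsuccV)^{N-q}$ for $q$ satellites falling in $\Agvis$ (via the finite-dimensional distribution of the BPP), substitutes $q=0$ and $q=1$, and obtains the Case~3 probability by complementation. Citing Remark~\ref{rem:pvis} for the binomial law of $\BPPg(\Agvis)$ is an acceptable shortcut for the same underlying independence-and-uniformity argument.
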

For more details, see Appendix \ref{App:Pcases}.
We remark that the probability of Case~1 is called the \textit{satellite-invisible probability}, which is the probability that all satellites are invisible, while the sum of the probabilities of Cases 2 and 3 is the \textit{satellite-visible probability}, which is the probability that at least one satellite is visible. 

\begin{rem}\label{rem:sat_inv_prob}
    For the northern hemisphere, i.e., $\phi > 0$, the satellite-invisible probability increases with the latitude $\phi$ and becomes one at $\phi = \phiinv$. This can be proved by taking the derivative of \eqref{eq:Pcase1} with respect to $\phi$ as
    \begin{align}\label{eq:der}
        \frac{d}{d\phi}(1-\PsuccV)^N 
        &= \frac{N(1-\PsuccV)^{N-1} \tan\phi}{\pi \sqrt{\frac{(1+\ag/\re)^2}{\sec^2\phi}- 1}}>0.
    \end{align}
    When $\phi>\phiinv$, $\PsuccV=0$, which results in the satellite-invisible probability equal to one.
\end{rem}

\begin{rem}\label{rem:Pcases_Ninf}
    When $N \rightarrow \infty$ for $|\phi|<\phiinv$, the terminal can see at least one serving satellite and one interfering satellite because $\P[\BPPg(\Agvis)=0] \rightarrow 0$, $\P[\BPPg(\Agvis)=1]  \rightarrow 0$, and $\P[\BPPg(\Agvis)>~1] \rightarrow 1$, which is intuitively true.
\end{rem}

Fig. \ref{Fig:Pcases} shows the case probabilities for various numbers of satellites $N=\{2,10,100\}$. This figure verifies Remark \ref{rem:sat_inv_prob} that the satellite-invisible probability corresponding to Case 1 increases with $\phi \in [0, \phiinv]$ and then reaches one at $\phi=\phiinv$.
As expected in Remark \ref{rem:Pcases_Ninf}, the satellite-visible probability corresponding to Cases 2 and 3 increases with $N$. Even with a hundred GEO satellites, $\P[\BPPg(\Agvis)>1]$ is almost one for any latitude less than $\phiinv$. Thus, Case 3 becomes the most probable case for terminals located at $|\phi|<\phiinv$ when a fairly large number of GEO satellites are in orbit. In contrast, only Case 1 happens for the terminals whose latitude is above $\phiinv$.

\subsection{Distance Distributions}
Next, we characterize the distributions of the following three distances from the typical terminal located at the latitude of~$\phi$.
\begin{itemize}
    \item $R\in[\rmin,\rmax]$: the distance to the nearest satellite
    \item $R_0\in[\rmin,\romax]$: the distance to the serving satellite
    \item $R_n\in[r_0,\romax]$: the distances to the interfering satellites given $R_0=r_0$, $n\in\left\{1,\cdots,\BPPg(\Agvis)-1\right\}$
\end{itemize}

The distances to the nearest and farthest points in the geostationary orbit are denoted by $\rmin$ and $\rmax$ and can be obtained by applying the Pythagorean theorem to $\triangle \mathrm{S_NTT}^\prime$ and $\triangle \mathrm{S_FTT}^\prime$ shown in Fig. \ref{Fig:GEO_geometry}, i.e.,
\begin{align}\label{eq:rmin_pt}
     \overline{\mathrm{S_N T}}^2 = \overline{\mathrm{S_N T^{\prime}}}^2 + \overline{\mathrm{T T^{\prime}}}^2  = \left(\re+\ag - \sqrt{t_x^2 + t_y^2}\right)^2 + t_z^2
\end{align}
and
\begin{align}\label{eq:rmax_pt}
    \overline{\mathrm{S_F T}}^2 = \overline{\mathrm{S_F T^{\prime}}}^2 + \overline{\mathrm{T T^{\prime}}}^2  = \left(\re+\ag + \sqrt{t_x^2 + t_y^2}\right)^2 + t_z^2.
\end{align}
After some manipulation using the facts that $t_x^2 + t_y^2=\re^2\cos^2\phi$ and $t_z^2=\re^2\sin^2\phi$, we can 
 obtain $\rmin$ and $\rmax$ as 
 \begin{align}
    \rmin=\overline{\mathrm{S_N T}}=\sqrt{(\re+\ag-\re\cos\phi)^2 + \re^2 \sin^2\phi}
\end{align}
and
\begin{align}
    \rmax=\overline{\mathrm{S_F T}}=\sqrt{(\re+\ag+\re\cos\phi)^2 + \re^2 \sin^2\phi}.
\end{align}
Moreover,
the maximum distance to the visible satellite, denoted by $\romax$, can be obtained as
$\romax = \overline{\mathrm{TE}} = \left(\overline{\mathrm{TM}}^2 + \overline{\mathrm{EM}}^2\right)^{1/2}$ according to the geometry in Fig.  \ref{Fig:GEO_geometry}. From $\overline{\mathrm{EM}}$ given in \eqref{eq:EM} with the fact that $\overline{\mathrm{TM}}=\re \tan\phi$, we have 
\begin{align}
    \romax=\sqrt{\re^2\tan^2\phi + (\re+\ag)^2 - \re^2 \sec^2\phi}=\sqrt{\ag^2+2\ag \re}.
\end{align}
\begin{rem}
The largest possible distance from the terminal to any visible satellite is always $\romax=\sqrt{\ag^2+2\ag \re}\approx 41,679$ km regardless of $\phi$. This result is intuitive because the endpoints of the visible arc can be seen as the intersection between the horizontal plane and the sphere with the radius of $\re+\ag$.
\end{rem}

\begin{figure}[!t]
\begin{center}
\includegraphics[width=.97\columnwidth]{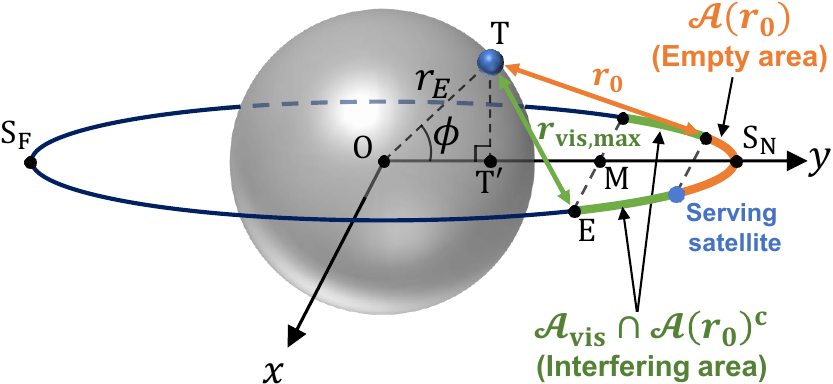}
\end{center}
\setlength\abovecaptionskip{.25ex plus .125ex minus .125ex}
\setlength\belowcaptionskip{.25ex plus .125ex minus .125ex}
\caption{The relationship between distances and areas. The points $\mathrm{S_N}$ and $\mathrm{S_F}$ are the nearest and farthest points in the orbit, and $\mathrm{T}^{\prime}$ is the projection of the terminal's position $\mathrm{T}$ onto the $xy-$plane.}
\label{Fig:GEO_geometry}
\end{figure}

Before deriving the distance distributions, we let $\Ag(r)$, $r\in[\rmin,\rmax]$, denote an arc of the geostationary orbit whose maximum distance to the terminal is $r$, which is shown in Fig. \ref{Fig:GEO_geometry} with $r=r_0$.
Using the law of cosines, the length of the arc $\Ag(r)$ is given by
\begin{align}\label{eq:lenBr}
    |\Ag(r)|
    &=2(\re+\ag)\angle{\mathrm{EOT}^{\prime}}\nonumber\\
    &=2(\re+\ag)\cos^{-1}\left(\frac{\overline{\mathrm{OE}}^2 + \overline{\mathrm{OT}^{\prime}}^2 - \overline{\mathrm{ET}^{\prime}}^2}{2 \overline{\mathrm{OE}} \cdot \overline{\mathrm{OT}^{\prime}}}\right)\nonumber\\
    &=2\pi(\re+\ag)\cdot\underbrace{\frac{1}{\pi}\cos^{-1}\left(\frac{(\re+\ag)^2+\re^2-r^2}{2(\re+\ag)\re\cos\phi}\right)}_{\triangleq\Psi(r,\phi)}
\end{align} 
where $\overline{\mathrm{OE}}=\re+\ag$, $\overline{\mathrm{OT}^{\prime}}=\re \cos\phi$, and $\overline{\mathrm{ET}^{\prime}}=\sqrt{r^2-t_z^2}=\sqrt{r^2-\re^2\sin\phi}$.
Please note that we define a new function $\Psi(r,\phi)$ for the simplicity of notation, which will be used to efficiently express the distance distributions in the following lemmas.

\begin{figure*}[!t]
\begin{center}
\includegraphics[width=2\columnwidth]{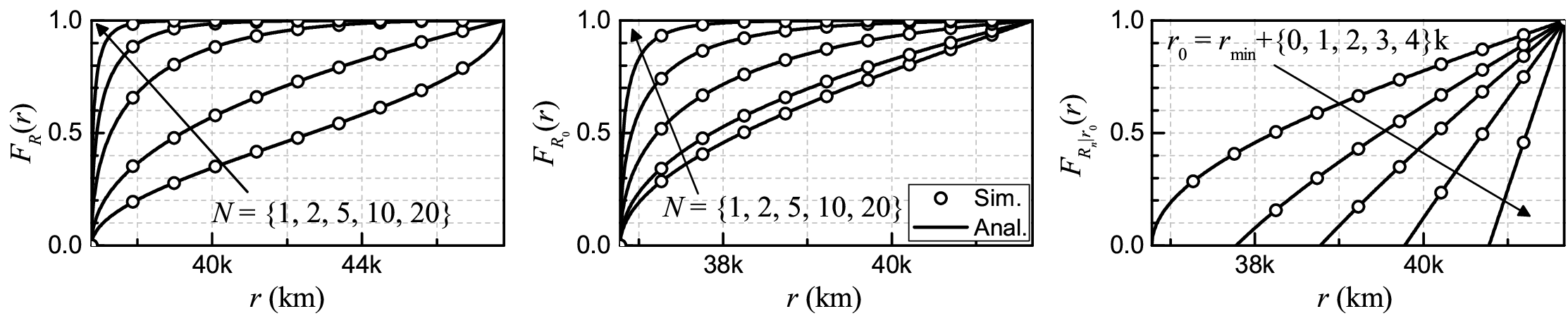}
\end{center}
\setlength\abovecaptionskip{.25ex plus .125ex minus .125ex}
\setlength\belowcaptionskip{.25ex plus .125ex minus .125ex}
\caption{The analytical and numerically derived CDFs of $R$, $R_0$, and $R_n$ for $\phi=30$ degrees.}
\label{Fig:CDFs}
\end{figure*}

\begin{lem}\label{lem:CDFR}
    The CDF and PDF of $R$ are respectively given by
    \begin{align}\label{eq:CDFR}
        F_R(r) = 
        \begin{cases} 
         0, & \mbox{if  } r<\rmin,\\
        1-\left(1-\Psi(r,\phi)\right)^N, & \mbox{if  } \rmin \le r < \rmax,\\
        1, & \mbox{otherwise}
        \end{cases}
    \end{align}
    and
    \begin{align}\label{eq:PDFR}
    f_R(r) = 
        \begin{cases} 
         \frac{2 N r \left(1-\PsiG(r,\phi)\right)^{N-1}}{ \pi\sqrt{v_1-\left(v_2-r^2\right)^2}}, & \mbox{if  } \rmin \le r < \rmax,\\
        0, & \mbox{otherwise}
        \end{cases}
    \end{align}
    where 
    $v_1=4(\re+\ag)^2\re^2\cos^2\phi$ and $v_2=(\re+\ag)^2+\re^2$.
\end{lem}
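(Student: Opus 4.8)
The plan is to treat $R$ as a standard nearest-point distance for a binomial point process and reduce everything to the single-point hitting probability $\Psi(r,\phi)$ already computed in \eqref{eq:lenBr}. The key observation is that $\Psi(r,\phi)=|\Ag(r)|/|\Ag|$ by construction: the numerator $|\Ag(r)|=2(\re+\ag)\angle\mathrm{EOT}'$ is the length of the sub-arc within distance $r$ of the terminal, and the denominator is the full orbit length $2\pi(\re+\ag)$. Because $\BPPg$ is a homogeneous BPP, each of the $N$ satellites is independently and uniformly distributed on $\Ag$, so the probability that a given satellite lies within distance $r$ of $\mathbf{t}$ is exactly $\Psi(r,\phi)$, and the count of such satellites is $\mathrm{Bin}(N,\Psi(r,\phi))$.

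The CDF then follows immediately from the complementary event. First I would note that $R<\rmin$ is impossible and $R\le\rmax$ is certain, giving the boundary values $0$ and $1$. For $\rmin\le r<\rmax$, I would write
\begin{align}
F_R(r)=\P[R\le r]=1-\P[R>r]=1-\bigl(1-\Psi(r,\phi)\bigr)^N,\nonumber
\end{align}
since $\{R>r\}$ is the event that all $N$ i.i.d.\ satellites avoid the arc $\Ag(r)$, each with probability $1-\Psi(r,\phi)$. This establishes \eqref{eq:CDFR}.

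For the PDF I would differentiate the CDF on $(\rmin,\rmax)$ via the chain rule, obtaining $f_R(r)=N\bigl(1-\Psi(r,\phi)\bigr)^{N-1}\,\tfrac{d}{dr}\Psi(r,\phi)$. Writing $\Psi(r,\phi)=\tfrac{1}{\pi}\cos^{-1}\!\bigl(g(r)\bigr)$ with $g(r)=\tfrac{(\re+\ag)^2+\re^2-r^2}{2(\re+\ag)\re\cos\phi}$, the derivative of $\arccos$ gives $\tfrac{d}{dr}\Psi=\tfrac{1}{\pi}\cdot\tfrac{-g'(r)}{\sqrt{1-g(r)^2}}$, and $g'(r)=\tfrac{-r}{(\re+\ag)\re\cos\phi}$ supplies the factor of $r$ in the numerator. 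The genuinely tedious step, and the only real obstacle, is the algebraic simplification of $1-g(r)^2$: clearing the common denominator $4(\re+\ag)^2\re^2\cos^2\phi$ turns $\sqrt{1-g^2}$ into $\tfrac{1}{2(\re+\ag)\re\cos\phi}\sqrt{\,4(\re+\ag)^2\re^2\cos^2\phi-\bigl((\re+\ag)^2+\re^2-r^2\bigr)^2}$. After substituting $v_1=4(\re+\ag)^2\re^2\cos^2\phi$ and $v_2=(\re+\ag)^2+\re^2$, the prefactor $(\re+\ag)\re\cos\phi$ cancels, leaving $\tfrac{d}{dr}\Psi=\tfrac{2r}{\pi\sqrt{v_1-(v_2-r^2)^2}}$ and hence the claimed expression \eqref{eq:PDFR}; outside $[\rmin,\rmax)$ the density vanishes since the CDF is constant there.
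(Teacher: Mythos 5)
Your proposal is correct and follows essentially the same route as the paper's proof: identify $\Psi(r,\phi)=|\Ag(r)|/|\Ag|$ as the single-satellite hitting probability, use independence of the $N$ uniformly placed satellites to get $F_R(r)=1-(1-\Psi(r,\phi))^N$, and differentiate for the PDF. Your explicit working of $\tfrac{d}{dr}\Psi(r,\phi)=\tfrac{2r}{\pi\sqrt{v_1-(v_2-r^2)^2}}$ simply fills in a step the paper leaves implicit, and it matches the derivative the paper states later in Appendix~\ref{App:CDFRn}.
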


For more details, see Appendix \ref{App:CDFR}.

\begin{lem}\label{lem:CDFR0}
    The CDF and PDF of $R_0$ are respectively given by
    \begin{align}\label{eq:CDFR0}
        F_{R_0}(r) =
            \begin{cases} 
            0, & \mbox{if  } r<\rmin,\\
             \frac{F_R(r)}{F_R( \romax)}, & \mbox{if  } \rmin \le r < \romax,\\
            1, & \mbox{otherwise}
            \end{cases}
    \end{align}
    and 
    \begin{align}\label{eq:PDFR0}
        f_{R_0}(r) =
            \begin{cases} 
             \frac{f_R(r)}{F_R( \romax)}, & \mbox{if  } \rmin \le r < \romax,\\
            0, & \mbox{otherwise.}
            \end{cases}
    \end{align}
\end{lem}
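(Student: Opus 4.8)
The plan is to recognize $R_0$ as the distance $R$ to the overall nearest satellite, \emph{conditioned} on that nearest satellite being visible. The crucial geometric fact, already implicit in the derivation of $\romax$, is that the visible arc coincides with the distance-bounded arc, namely $\Agvis=\Ag(\romax)$: a satellite lies in $\Agvis$ if and only if its distance to the terminal is at most $\romax$. I would state this explicitly, noting that the endpoints $\mathrm{E}$ of $\Agvis$ are precisely the orbit points at distance $\romax$, that $\Ag(r)$ grows monotonically in $r$, and that the nearest orbit point $\mathrm{S_N}$ (at distance $\rmin\le\romax$) always lies inside $\Agvis$.

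From this identification, the event $\{R\le\romax\}$ is exactly the event that the overall nearest satellite is visible, which is in turn exactly the event that at least one visible satellite exists (Case~2 or Case~3). Moreover, on this event the nearest satellite and the nearest \emph{visible} satellite coincide, so $R_0=R$ whenever $R\le\romax$. In other words, $R_0$ has the conditional law of $R$ given $R\le\romax$, and the normalizing probability is $F_R(\romax)=\P[\BPPg(\Agvis)\ge 1]$.

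The CDF then follows by a one-line conditioning argument using $F_R$ from Lemma~\ref{lem:CDFR}:
\begin{align}
    F_{R_0}(r)
    &= \P\!\left[R\le r \mid R\le\romax\right] \nonumber\\
    &= \frac{\P\!\left[R\le\min(r,\romax)\right]}{\P\!\left[R\le\romax\right]}
    = \frac{F_R(r)}{F_R(\romax)}
\end{align}
for $\rmin\le r<\romax$, where $\min(r,\romax)=r$ on this range; the boundary cases $F_{R_0}(r)=0$ for $r<\rmin$ and $F_{R_0}(r)=1$ for $r\ge\romax$ follow from $F_R(\rmin)=0$ and $\min(r,\romax)=\romax$. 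Differentiating on $(\rmin,\romax)$, with $F_R(\romax)$ a fixed constant independent of $r$, yields $f_{R_0}(r)=f_R(r)/F_R(\romax)$. As a consistency check I would note $\PsiG(\romax,\phi)=|\Agvis|/|\Ag|=\PsuccV$, so $F_R(\romax)=1-(1-\PsuccV)^N$, matching $1-\P[\BPPg(\Agvis)=0]$ from Lemma~\ref{lem:Pcases}.

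The main obstacle is the geometric identification $\Agvis=\Ag(\romax)$ together with the claim that $R_0=R$ on $\{R\le\romax\}$; once these are established the probability computation is routine. The delicate point is verifying that the nearest satellite cannot be invisible while some farther satellite is visible---this is ruled out precisely because visibility is equivalent to the distance bound $R\le\romax$, so if the closest satellite violates it, every satellite does, and the terminal falls into Case~1 with no serving satellite at all.
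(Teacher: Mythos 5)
Your proposal is correct and follows essentially the same route as the paper's proof: identifying $R_0$ as $R$ conditioned on the event $\{\BPPg(\Agvis)>0\}=\{R\le\romax\}$ and computing the conditional CDF, then differentiating. You simply make explicit the geometric identification $\Agvis=\Ag(\romax)$ that the paper leaves implicit in the phrase ``the maximum distance between the terminal and the visible satellite is defined as $\romax$,'' which is a welcome clarification but not a different argument.
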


For more details, see Appendix \ref{App:CDFR0}.

 \begin{rem}
     When $N \rightarrow \infty$, the CDF and PDF of the distance $R_0$ asymptotically become $F_{R_0}(r) \rightarrow u(r-\rmin)$ and $f_{R_0}(r) \rightarrow \delta(r-\rmin)$, respectively. This means that the distance to the serving satellite is deterministic and has a value of $\rmin$, i.e., the possible minimum distance to satellites given the latitude $\phi$.
 \end{rem}

\begin{lem}\label{lem:CDFRn}
    Given $R_0=r_0$, the CDF and PDF of $R_n$, $n\in\{1,2,\cdots,\BPPg(\Agvis)-1\}$, are, respectively, given by
    \begin{align}\label{eq:CDFRn}
        F_{R_n|r_0}(r) = 
        \begin{cases} 
            0, & \mbox{if  } r<r_0,\\
             \frac{\Psi(r,\phi)-\Psi(r_0,\phi)}{\Psi(\romax,\phi)-\Psi(r_0,\phi)}, & \mbox{if  } r_0 \le r < \romax,\\
            1, & \mbox{otherwise}
        \end{cases}
    \end{align}
    and 
    \begin{align}\label{eq:PDFRn}
    f_{R_n|r_0}(r) = 
    \begin{cases} 
         \frac{2r/\left(\pi \sqrt{v_1-\left(v_2-r^2\right)^2}\right)}{\Psi(\romax,\phi)-\Psi(r_0,\phi)}, & \mbox{if  } r_0 \le r < \romax,\\
        0, & \mbox{otherwise}.
    \end{cases}
\end{align}
\end{lem}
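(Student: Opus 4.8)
The plan is to exploit the defining property of the BPP: conditioned on the number of satellites falling in any region, those satellites are independent and uniformly distributed over that region. Since each satellite sits uniformly on the full orbit $\Ag$, the distance from the terminal to a single uniformly placed satellite has CDF $\Psi(r,\phi)=|\Ag(r)|/|\Ag|$ on $[\rmin,\rmax]$, i.e.\ the fraction of the orbit lying within distance $r$. Under the conditioning $R_0=r_0$ the serving satellite is the \emph{nearest} visible satellite, so every interfering satellite is a visible satellite whose distance exceeds $r_0$; equivalently, each interfering satellite lies in the sub-arc $\Agvis\setminus\Ag(r_0)$, the visible arc $\Agvis=\Ag(\romax)$ with the portion within distance $r_0$ removed.

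First I would argue that, given $R_0=r_0$ and $\BPPg(\Agvis)=k$ for any $k\ge 2$, the remaining $k-1$ visible satellites are i.i.d.\ and uniform on $\Agvis\setminus\Ag(r_0)$. This is the standard order-statistics fact for i.i.d.\ uniform points: conditioning on the smallest distance being $r_0$ leaves the others uniform on the region beyond $r_0$, and the visibility restriction confines that uniform law to $(r_0,\romax]$. Crucially, the resulting conditional law does not depend on $k$, so it survives averaging over $\BPPg(\Agvis)$; hence a generic interfering distance $R_n$ has the distribution of the distance to a point uniform on $\Agvis\setminus\Ag(r_0)$.

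The conditional CDF is then an arc-length ratio,
\begin{align}
F_{R_n|r_0}(r)=\frac{|\Ag(r)|-|\Ag(r_0)|}{|\Ag(\romax)|-|\Ag(r_0)|},\qquad r_0\le r<\romax,\nonumber
\end{align}
because for $r$ in this range $\Ag(r_0)\subseteq\Ag(r)\subseteq\Agvis$, so the arc within distance $r$ and beyond $r_0$ has length $|\Ag(r)|-|\Ag(r_0)|$. Substituting $|\Ag(r)|=2\pi(\re+\ag)\Psi(r,\phi)$ from \eqref{eq:lenBr}, the common factor $2\pi(\re+\ag)$ cancels and yields the stated $F_{R_n|r_0}$. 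Differentiating in $r$ gives $f_{R_n|r_0}(r)=\Psi'(r,\phi)/(\Psi(\romax,\phi)-\Psi(r_0,\phi))$, and I would reuse the derivative $\Psi'(r,\phi)=2r/(\pi\sqrt{v_1-(v_2-r^2)^2})$ already extracted when differentiating $F_R$ in Lemma~\ref{lem:CDFR}, matching \eqref{eq:PDFRn}.

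The main obstacle is the conditioning step in the second paragraph: making precise that fixing the nearest visible distance at $r_0$ leaves the other visible satellites i.i.d.\ uniform on the complementary arc, independently of how many visible satellites there are. I expect to handle this by writing the joint law of the $N$ i.i.d.\ distances, splitting into the event that exactly one attains the minimum $r_0$ while the rest lie in $(r_0,\romax]$, and checking that the conditional density factorizes; the cancellation of the $k$-dependence is precisely what allows the single-satellite formula to stand for every Case~3 realization.
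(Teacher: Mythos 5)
Your argument is correct and essentially the same as the paper's: both reduce the conditional law of an interfering satellite's distance to a uniform-point arc-length ratio, $F_{R_n|r_0}(r)=\bigl(|\Ag(r)|-|\Ag(r_0)|\bigr)/\bigl(|\Agvis|-|\Ag(r_0)|\bigr)$, and then differentiate using $\Psi'(r,\phi)$. The only cosmetic difference is that the paper first normalizes over the full remaining orbit $\Ag\cap\Ag(r_0)^{\mathrm{c}}$ and then conditions on visibility (so the factor $1-\Psi(r_0,\phi)$ cancels), whereas you write the ratio over $\Agvis\cap\Ag(r_0)^{\mathrm{c}}$ directly.
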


For more details, see Appendix \ref{App:CDFRn}.
\\

Fig. \ref{Fig:CDFs} verifies the analytical expressions given in Lemmas \ref{lem:CDFR}, \ref{lem:CDFR0}, and \ref{lem:CDFRn} for $\phi=30$ degrees. It is shown that our analyses are in good agreement with the simulation results. 
The case probabilities and the distance distributions obtained in this section are the key instruments, which will be used for deriving the stochastic geometry-based performance in the next section.

\section{Coverage Analysis} \label{sec:Coverage}
In this section, we first derive the coverage probability of the GEO satellite network with the BPP-based satellite distribution. Then, we further simplify the expression using the Poisson limit theorem, which states that the binomial distribution can be approximated as the Poisson distribution as $N\rightarrow \infty$ [\ref{Ref:Book:Chiu}].

\subsection{Binomial Distribution-Based Analysis}
Before analyzing the coverage probability, we derive the Laplace transform of the aggregated interference power using the following two lemmas.

\begin{lem}\label{lem:num_int_sat}
        Given that the serving satellite is at a distance of $r_0$ from the terminal, the number of possible interfering satellites is $\BPPg(\Ag\cap\Ag(r_0)^{\mathrm{c}}) =N-1$ (except for the serving satellite), and the region where interfering satellites can be located is $\Agvis \cap \Ag(r_0)^{\mathrm{c}}$. The number of interfering satellites in this region is a  binomial random variable with the success probability $\PsuccI$, i.e., $\BPPg(\Agvis \cap \Ag(r_0)^{\mathrm{c}}) \sim \mathrm{Bin}(N-1, \PsuccI)$,
        where 
    \begin{align}\label{eq:PsuccI}
    \PsuccI = \frac{|\Agvis\cap\Ag(r_0)^{\mathrm{c}}|}{|\Ag\cap\Ag(r_0)^{\mathrm{c}}|} = \frac{\Psi(\romax,\phi)-\Psi(r_0,\phi)}{1-\Psi(r_0,\phi)}.
    \end{align}
\end{lem}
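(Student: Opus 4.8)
The plan is to condition on $\{R_0=r_0\}$ and express the interferer count as a sum of independent Bernoulli indicators over the non-serving satellites. The first ingredient I would establish is the geometric containment $\Ag(r_0)\subseteq\Agvis$ for every $r_0\le\romax$. Since the two endpoints of the visible arc lie exactly on the terminal's horizontal plane at distance $\romax$, any point of the orbit below that plane (hence invisible) is strictly farther than $\romax$ from $\mathbf{t}$; therefore the sub-arc $\Ag(r_0)$ of points within distance $r_0\le\romax$ consists entirely of visible points. This makes the nearest-visible constraint clean: a satellite is visible and closer than $r_0$ \emph{iff} it lies in $\Ag(r_0)$.

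Next I would use the defining property of the BPP, namely that $\mathbf{x}_0,\dots,\mathbf{x}_{N-1}$ are i.i.d.\ uniform on $\Ag$. By exchangeability it suffices to condition on a fixed satellite, say $\mathbf{x}_0$, being the serving one with $\|\mathbf{x}_0-\mathbf{t}\|=r_0$. The event that $\mathbf{x}_0$ is the nearest visible satellite at distance $r_0$ fixes $\mathbf{x}_0$ and, by the containment above, imposes on each remaining satellite exactly the single constraint $\mathbf{x}_i\notin\Ag(r_0)$, i.e.\ $\mathbf{x}_i\in\Ag\cap\Ag(r_0)^{\mathrm{c}}$. This yields the first assertion of the lemma at once: precisely $N-1$ candidate interferers remain, all confined to $\Ag\cap\Ag(r_0)^{\mathrm{c}}$, and those among them that are visible are exactly the ones in $\Agvis\cap\Ag(r_0)^{\mathrm{c}}$.

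Because the atoms are independent, restricting each $\mathbf{x}_i$ to $\Ag\cap\Ag(r_0)^{\mathrm{c}}$ leaves it uniform on that sub-arc, independently across $i\in\{1,\dots,N-1\}$. Hence each remaining satellite lands in the interfering region independently with probability $\PsuccI=|\Agvis\cap\Ag(r_0)^{\mathrm{c}}|/|\Ag\cap\Ag(r_0)^{\mathrm{c}}|$, so $\BPPg(\Agvis\cap\Ag(r_0)^{\mathrm{c}})$ is a sum of $N-1$ i.i.d.\ Bernoulli$(\PsuccI)$ variables, i.e.\ $\mathrm{Bin}(N-1,\PsuccI)$; this conditional law is identical on every choice of serving index, so it holds given $\{R_0=r_0\}$. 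To finish, I would compute the two arc lengths from \eqref{eq:lenBr}. Using $\Agvis=\Ag(\romax)$ together with $\Ag(r_0)\subseteq\Agvis\subseteq\Ag$, the arcs subtract cleanly to give $|\Agvis\cap\Ag(r_0)^{\mathrm{c}}|=2\pi(\re+\ag)\left(\Psi(\romax,\phi)-\Psi(r_0,\phi)\right)$ and $|\Ag\cap\Ag(r_0)^{\mathrm{c}}|=2\pi(\re+\ag)\left(1-\Psi(r_0,\phi)\right)$, whose ratio is the claimed $\PsuccI$.

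The only delicate step is the conditioning argument: because $R_0$ is the minimum visible distance, an order-type statistic, I must justify carefully that conditioning on it reduces to a plain restriction of each non-serving point's uniform law to $\Ag\cap\Ag(r_0)^{\mathrm{c}}$. The geometric containment $\Ag(r_0)\subseteq\Agvis$ is precisely what collapses the nearest-visible event into the independent per-point constraint $\mathbf{x}_i\notin\Ag(r_0)$; once that is secured, independence of the BPP atoms delivers the binomial law and the length computation is routine.
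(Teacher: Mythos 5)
Your proof is correct and follows the same route the paper intends: the paper gives no detailed proof, stating only that the result ``comes directly from the definition of the BPP,'' and your argument is exactly the standard one that fills this in — the containment $\Ag(r_0)\subseteq\Agvis=\Ag(\romax)$ reduces the nearest-visible conditioning to the per-point restriction $\mathbf{x}_i\in\Ag\cap\Ag(r_0)^{\mathrm{c}}$, after which independence of the i.i.d.\ uniform atoms gives the $\mathrm{Bin}(N-1,\PsuccI)$ law with $\PsuccI$ the length ratio computed from \eqref{eq:lenBr}. No gaps.
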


This result comes directly from the definition of the BPP. For more details, see [\ref{Ref:Book:Chiu}].

\begin{lem}\label{lem:LI}
Given $R_0=r_0$, the Laplace transform of the aggregated interference power $I=\sum_{n=1}^{\BPPg(\Agvis\cap\Ag(r_0)^{\mathrm{c}})} \Pt \Gn h_n \ell(\mathbf{x}_n)$ is given by
\begin{align}
    \mathcal{L}_{I|r_0}(s)
    &=\sum_{\nI=0}^{N-1} \binom{N-1}{\nI}\PsuccI^{\nI}\PsuccI^{N-1-\nI} \nonumber\\
    &\times   \prod_{n=1}^{\nI}\int_{r_0}^{\romax}\left(\frac{m \un r_n^{\alpha}}{s+m \un r_n^{\alpha}}\right)^m f_{R_n|r_0}(r_n) dr_n    
\end{align}
    where $\un = 16 \pi^2\fc^2/(\Pt \Gn c^2)$.
\end{lem}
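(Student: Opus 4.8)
The plan is to prove Lemma \ref{lem:LI} by conditioning on the number of interfering satellites and then exploiting the independence structure inherited from the BPP. By definition, $\mathcal{L}_{I|r_0}(s) = \E[e^{-sI} \mid R_0 = r_0]$. First I would invoke Lemma \ref{lem:num_int_sat}, which states that the number of interferers $\nI \triangleq \BPPg(\Agvis \cap \Ag(r_0)^{\mathrm{c}})$ follows $\mathrm{Bin}(N-1, \PsuccI)$. Applying the law of total expectation over $\nI$ yields a sum over $\nI = 0, \dots, N-1$ weighted by the binomial probability mass function $\binom{N-1}{\nI}\PsuccI^{\nI}(1-\PsuccI)^{N-1-\nI}$, which produces the outer summation in the claimed expression.

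Next, I would fix $\nI$ and compute the conditional Laplace transform $\E[e^{-sI} \mid \nI]$. Writing $I = \sum_{n=1}^{\nI} \Pt \Gn h_n \ell(\mathbf{x}_n)$, the key observation is that, conditioned on the count $\nI$, the interfering satellites are independently and uniformly placed in $\Agvis \cap \Ag(r_0)^{\mathrm{c}}$, so their distances $\{R_n\}$ are i.i.d. with the conditional density $f_{R_n|r_0}$ from Lemma \ref{lem:CDFRn}; moreover, the fading gains $\{h_n\}$ are i.i.d. and independent of the positions. Hence the exponential factorizes, giving $\E[e^{-sI}\mid\nI] = \prod_{n=1}^{\nI} \E[e^{-s\Pt\Gn h_n \ell(\mathbf{x}_n)}]$, with all $\nI$ factors identical.

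For each factor I would first average over the fading conditioned on $R_n = r_n$. The CDF $F_{h_n}$ stated in the network model is exactly that of a Gamma random variable with shape and rate both equal to $m$, whose moment generating function is $\E[e^{-t h_n}] = (1 + t/m)^{-m}$. Substituting $t = s\Pt\Gn\ell(\mathbf{x}_n)$, using $\ell(\mathbf{x}_n) = (c/4\pi\fc)^2 r_n^{-\alpha}$, and noting that the definition $\un = 16\pi^2\fc^2/(\Pt\Gn c^2)$ gives $\Pt\Gn(c/4\pi\fc)^2 = 1/\un$, the factor collapses to $(m\un r_n^{\alpha}/(s + m\un r_n^{\alpha}))^m$. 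Averaging this over the distance density $f_{R_n|r_0}$ produces the integral appearing inside the product, and assembling the outer binomial sum with the $\nI$-fold product completes the argument.

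I expect the main obstacle to be justifying the factorization rigorously: one must argue that conditioning on both $R_0 = r_0$ and on the interferer count $\nI$ leaves the interferers' positions i.i.d. uniform on the residual arc $\Agvis \cap \Ag(r_0)^{\mathrm{c}}$ (so that Lemma \ref{lem:CDFRn} legitimately supplies their common density) and statistically independent of the i.i.d. fading gains. Once this independence is established, the remaining steps---recognizing the Gamma moment generating function and simplifying the exponent through $\un$---are routine algebra.
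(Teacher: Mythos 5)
Your proposal is correct and follows essentially the same route as the paper's proof: condition on the binomial interferer count from Lemma~\ref{lem:num_int_sat}, factorize using the i.i.d.\ positions and fading, and evaluate each factor via the Gamma moment generating function and the conditional distance density of Lemma~\ref{lem:CDFRn} (the paper merely applies the law of total expectation as the final step rather than the first, which changes nothing). Note that your binomial weight $\binom{N-1}{\nI}\PsuccI^{\nI}(1-\PsuccI)^{N-1-\nI}$ is the correct probability mass function; the factor $\PsuccI^{N-1-\nI}$ in the paper's stated expression is evidently a typographical slip for $(1-\PsuccI)^{N-1-\nI}$.
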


For more details, see Appendix \ref{App:LTG}.\\


The coverage probability is the probability that the $\SINR$ at the typical terminal defined in \eqref{eq:SINR} is greater than or equal to a threshold $\tau$, i.e., $\P[\SINR\geq \tau]$. 
Using the result in Lemma~\ref{lem:LI}, the coverage probability is given in the following theorem.

\begin{thm}\label{thm:Pcov}
    The coverage probability for GEO satellite networks is approximated as
    \begin{align}\label{eq:Pcov}
    \Pcov&(\tau;m)
        \approx (1-(1-\PsuccV)^{N}) \sum_{i=1}^{m}\binom{m}{i}(-1)^{i+1} \nonumber\\
        &\times\int_{\rmin}^{\romax} e^{-\nu i \uo N_0 W \tau r^{\alpha}}\mathcal{L}_{I|r_0=r}(\nu i \uo \tau r^{\alpha})f_{R_0}(r)dr
    \end{align}
    where $\nu=m(m!)^{-1/m}$.
\end{thm}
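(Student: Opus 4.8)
The plan is to split the coverage probability along the three visibility cases of Section~\ref{sec:System_model} and then dispatch the Nakagami-$m$ fading through a Gamma-tail approximation. First I would note that Case~1 contributes nothing, since $\SINR=0<\tau$ whenever $\BPPg(\Agvis)=0$, so that
\begin{align}
\Pcov(\tau;m)=\P[\BPPg(\Agvis)\ge 1]\,\P[\SINR\ge\tau \mid \BPPg(\Agvis)\ge 1],\nonumber
\end{align}
where $\P[\BPPg(\Agvis)\ge 1]=1-(1-\PsuccV)^{N}$ by Lemma~\ref{lem:Pcases}; this already produces the prefactor in \eqref{eq:Pcov}. Conditioning next on the serving distance $R_0=r$, whose density is $f_{R_0}$ from Lemma~\ref{lem:CDFR0}, I would treat Cases~2 and~3 uniformly: Case~2 is exactly the zero-interferer outcome, which is already subsumed in the binomial interferer count $\mathrm{Bin}(N-1,\PsuccI)$ of Lemma~\ref{lem:num_int_sat}.

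Next I would recast the SINR event as a tail event for the fading gain. Writing $\ell(\mathbf{x}_0)=\left(\tfrac{c}{4\pi\fc}\right)^2 R_0^{-\alpha}$ and $\uo=16\pi^2\fc^2/(\Pt\Go c^2)$ (defined analogously to $\un$ in Lemma~\ref{lem:LI}), the condition $\Pt\Go h_0\ell(\mathbf{x}_0)/(I+N_0W)\ge\tau$ is equivalent to $h_0\ge \tau\uo r^{\alpha}(I+N_0W)$. Since $h_0$ is a normalized Nakagami-$m$ power gain with complementary CDF $e^{-mx}\sum_{k=0}^{m-1}(mx)^k/k!$, the crux of the proof is to replace this exact incomplete-Gamma tail by the Alzer-type approximation
\begin{align}
\P[h_0\ge x]&\approx 1-\left(1-e^{-\nu x}\right)^{m}\nonumber\\
&=\sum_{i=1}^{m}\binom{m}{i}(-1)^{i+1}e^{-\nu i x},\nonumber
\end{align}
with $\nu=m(m!)^{-1/m}$. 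This single substitution is the sole origin of the ``$\approx$'' in the statement, and it is the step I expect to require the most care, since it trades the exact tail for a sum of exponentials that is tractable under the Laplace transform but only approximately equal.

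Finally I would set $x=\tau\uo r^{\alpha}(I+N_0W)$, factor the exponential into its deterministic noise part $e^{-\nu i\uo N_0W\tau r^{\alpha}}$ and its random interference part $e^{-\nu i\uo\tau r^{\alpha}I}$, and take the expectation over $I$. Because the interferers' fading gains and positions are independent of $h_0$ given $R_0=r$, this expectation is precisely the conditional Laplace transform $\mathcal{L}_{I|r_0=r}(\nu i\uo\tau r^{\alpha})$ supplied by Lemma~\ref{lem:LI}. Averaging over $r$ against $f_{R_0}(r)$ on $[\rmin,\romax]$ and restoring the visibility prefactor $1-(1-\PsuccV)^{N}$ then reproduces \eqref{eq:Pcov} exactly. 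The only remaining work is the bookkeeping of interchanging the finite sum over $i$ with the expectation and the integral, which is routine.
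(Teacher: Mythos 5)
Your proposal is correct and follows essentially the same route as the paper's proof in Appendix~F: drop the zero-visibility case to obtain the prefactor $1-(1-\PsuccV)^{N}$, condition on $R_0=r$, apply the Alzer-type approximation $\P[h_0\ge x]\approx\sum_{i=1}^{m}\binom{m}{i}(-1)^{i+1}e^{-\nu i x}$ to the Nakagami-$m$ tail, and identify the interference expectation with $\mathcal{L}_{I|r_0=r}$. You also correctly pinpoint the Gamma-tail approximation as the sole source of the ``$\approx$,'' matching the paper's citation of the millimeter-wave analysis literature for that step.
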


For more details, see Appendix \ref{App:Pcov}.\\

This approximated coverage probability in Theorem~\ref{thm:Pcov} becomes exact when $m=1$, which is given in the following corollary.

\begin{cor}
    Under the Rayleigh fading, i.e., $m=1$, the coverage probability for GEO satellite networks in Theorem~\ref{thm:Pcov} becomes exact and is given by
    \begin{align}\label{eq:Pcov_m=1}
    \Pcov(&\tau;1)
        = (1-(1-\PsuccV)^{N}) \nonumber\\
        &\times\int_{\rmin}^{\romax} e^{-\uo N_0 W \tau r^{\alpha}}\mathcal{L}_{I|r_0=r}(\uo \tau r^{\alpha})f_{R_0}(r)dr.
    \end{align}
\end{cor}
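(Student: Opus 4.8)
The plan is to decompose the coverage probability according to the three visibility cases and to recognize that Cases~2 and~3 can be handled uniformly. Since $\SINR=0$ whenever $\BPPg(\Agvis)=0$, coverage never occurs in Case~1 for any $\tau>0$, so the whole event is supported on $\{\BPPg(\Agvis)\ge 1\}$, which by Lemma~\ref{lem:Pcases} carries probability $1-(1-\PsuccV)^N$ and factors out front. On the remaining event the serving satellite sits at distance $R_0$ with density $f_{R_0}$ from Lemma~\ref{lem:CDFR0}, and the aggregate interference $I$ has the conditional Laplace transform $\mathcal{L}_{I|r_0}$ from Lemma~\ref{lem:LI}. Case~2 is absorbed automatically here, because the interferer count $\mathrm{Bin}(N-1,\PsuccI)$ of Lemma~\ref{lem:num_int_sat} may equal zero, in which case $I=0$ and the SINR reduces to the signal-over-noise form.

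First I would rewrite the coverage event as a fading-threshold event. Using $\ell(\mathbf{x}_0)=(c/4\pi\fc)^2 R_0^{-\alpha}$ together with $\uo=16\pi^2\fc^2/(\Pt\Go c^2)$, the condition $\Pt\Go h_0\ell(\mathbf{x}_0)\ge \tau(I+N_0 W)$ collapses to $h_0\ge \tau(I+N_0 W)\uo R_0^\alpha$. Writing $x=\tau(I+N_0 W)\uo R_0^\alpha$, the coverage probability conditioned on $(R_0,I)$ equals the normalized Nakagami-$m$ (Gamma) complementary CDF $\P[h_0\ge x]=e^{-mx}\sum_{k=0}^{m-1}(mx)^k/k!$.

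The crux is to linearize this CCDF into pure exponentials. Applying Alzer's inequality, I would use the approximation $\P[h_0\ge x]\approx 1-(1-e^{-\nu x})^m=\sum_{i=1}^m\binom{m}{i}(-1)^{i+1}e^{-\nu i x}$ with $\nu=m(m!)^{-1/m}$; this is the sole source of the $\approx$ in the theorem. Substituting $x$ back in and averaging, the expectation over $I$ turns each exponential into a Laplace-transform evaluation, $\E[e^{-\nu i \tau\uo R_0^\alpha I}\mid R_0]=\mathcal{L}_{I|R_0}(\nu i\uo\tau R_0^\alpha)$, with the deterministic noise term $e^{-\nu i\uo N_0 W\tau R_0^\alpha}$ pulled out; integrating against $f_{R_0}(r)$ over $[\rmin,\romax]$ and reinstating the front factor yields \eqref{eq:Pcov}.

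The main obstacle is precisely this linearization step: the exact CCDF carries polynomial factors $(mx)^k$, whose expectation over $I$ would require the $k$-th derivatives of $\mathcal{L}_{I|r_0}$ and destroy tractability. Alzer's inequality sidesteps this by replacing the polynomial-times-exponential with a finite sum of exponentials, each of which is directly a Laplace transform. This also explains the exactness asserted in the corollary: at $m=1$ the sum has the single term $i=1$ with $\nu=1$, so $\P[h_0\ge x]=e^{-x}$ holds identically and no approximation is incurred.
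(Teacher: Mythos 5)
Your proposal is correct and follows essentially the same route as the paper: the paper proves the corollary by simply setting $m=1$ in Theorem~\ref{thm:Pcov}, and your identification of the CCDF linearization $\P[h_0\ge x]\approx 1-(1-e^{-\nu x})^m$ as the sole source of approximation---which collapses to the exact identity $e^{-x}$ when $m=1$ and $\nu=1$---is precisely the (implicit) reason the paper's specialization yields an exact result. Your additional re-derivation of the theorem itself matches the paper's Appendix~\ref{App:Pcov} step for step, including the factoring out of $\P[\BPPg(\Agvis)>0]=1-(1-\PsuccV)^N$ and the absorption of Case~2 into the $\nI=0$ term of the interference Laplace transform.
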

This result is obtained by directly setting $m=1$ in \eqref{eq:Pcov}.
The expression for the coverage probability in \eqref{eq:Pcov} includes the integrals that appear to be unsolvable analytically, but can be evaluated numerically. To further simplify the expression, we conduct the Poisson limit theorem-based approximation in the following section. 

\subsection{Poisson Limit Theorem-Based Approximation}\label{sec:PLT}
When sufficiently many GEO satellites are in orbit, e.g., $N\rightarrow\infty$, the BPP can be interpreted as a PPP $\bar{\BPPg}$ with the density of 
\begin{align}
    \lambda=\frac{N}{|\Ag|}=\frac{N}{2\pi(\re+\ag)}
\end{align} 
according to the Poisson limit theorem [\ref{Ref:Book:Chiu}]. 
The void probability of the PPP $\bar{\BPPg}$ for arc $\Ag(r)$ is the probability that there is no satellite in $\Ag(r)$, which is given by $e^{-\lambda |\Ag(r)|}$. 
Using this property, we obtain the satellite visible probability as
\begin{align}\label{eq:Pvis_approx}
    \P[\BPPg&(\Agvis)>0] 
    = 1-\P[\BPPg(\Agvis)=0]\nonumber\\
    &\approx 1-e^{-\lambda |\Ag(\romax)|} = 1-e^{-N \Psi(\romax,\phi)}
\end{align}
and approximate the CDF and PDF of $R$ and $R_0$ in the next two lemmas.

\begin{lem}\label{lem:CDFR_approx}
    The approximated CDF and PDF of $R$, denoted by $\tilde{F}_{R}(r)$ and $\tilde{f}_{R}(r)$, are respectively given by
    \begin{align}\label{eq:CDFR_approx}
    \tilde{F}_R(r) =
    \begin{cases} 
         0, & \mbox{if  } r<\rmin,\\
        1-e^{-N \Psi(r,\phi)}, & \mbox{if  } \rmin \le r < \rmax,\\
        1, & \mbox{otherwise}
    \end{cases}
    \end{align}
    and 
    \begin{align}\label{eq:PDFR_approx}
    \tilde{f}_R(r) = 
        \begin{cases} 
          \frac{2 r N e^{-N \Psi(r,\phi)}}{\pi\sqrt{v_1-\left(v_2-r^2\right)^2}}, & \mbox{if  } \rmin \le r < \rmax,\\
        0, & \mbox{otherwise}.
        \end{cases}
    \end{align}
\end{lem}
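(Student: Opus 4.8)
The plan is to replicate, in the PPP setting, the event-based derivation used for the exact distribution in Lemma~\ref{lem:CDFR}. The distance $R$ to the nearest satellite exceeds $r$ exactly when the arc $\Ag(r)$ (whose maximum distance to the terminal is $r$) contains no satellite, so $\tilde{F}_R(r)=\P[R\le r]=1-\P[\bar{\BPPg}(\Ag(r))=0]$. Under the Poisson approximation the right-hand probability is precisely the void probability $e^{-\lambda|\Ag(r)|}$ of the PPP $\bar{\BPPg}$, the same device already invoked to obtain the satellite-visible probability in \eqref{eq:Pvis_approx}. First I would substitute the arc length $|\Ag(r)|=2\pi(\re+\ag)\Psi(r,\phi)$ from \eqref{eq:lenBr} together with $\lambda=N/[2\pi(\re+\ag)]$; the prefactors cancel to give $\lambda|\Ag(r)|=N\Psi(r,\phi)$, hence $\tilde{F}_R(r)=1-e^{-N\Psi(r,\phi)}$ on $[\rmin,\rmax)$. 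Equivalently, the same expression follows by applying the limit $(1-\Psi(r,\phi))^N\to e^{-N\Psi(r,\phi)}$ of the Poisson limit theorem directly to the exact CDF of Lemma~\ref{lem:CDFR}, which provides a useful consistency check.

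For the PDF I would differentiate the CDF on the open interval, obtaining $\tilde{f}_R(r)=N e^{-N\Psi(r,\phi)}\,\partial\Psi(r,\phi)/\partial r$. The only quantity to evaluate is $\partial\Psi/\partial r$, and this is already computed in Appendix~\ref{App:CDFR} when differentiating the exact CDF: writing $\Psi(r,\phi)=\tfrac1\pi\cos^{-1}\!\big(((\re+\ag)^2+\re^2-r^2)/(2(\re+\ag)\re\cos\phi)\big)$ and applying the chain rule yields $\partial\Psi/\partial r=2r/\big(\pi\sqrt{v_1-(v_2-r^2)^2}\big)$, where the radical arises from simplifying $\sqrt{1-g^2}$ with $v_1=4(\re+\ag)^2\re^2\cos^2\phi$ and $v_2=(\re+\ag)^2+\re^2$. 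Inserting this factor reproduces the claimed $\tilde{f}_R(r)=2rNe^{-N\Psi(r,\phi)}/\big(\pi\sqrt{v_1-(v_2-r^2)^2}\big)$, so essentially no new calculation is needed beyond reusing the earlier appendix.

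The remaining step is to verify the piecewise boundary behavior, and this is where I would be most careful rather than where any real difficulty lies. At $r=\rmin$ the arc $\Ag(r)$ degenerates to a point, so $\Psi(\rmin,\phi)=0$ and $\tilde{F}_R(\rmin)=0$; at $r=\rmax$ the arc closes up to the full orbit, so $\Psi(\rmax,\phi)=1$ and the CDF saturates, matching the value $1$ for $r\ge\rmax$. The support claims for $\tilde{f}_R$ then follow immediately. The main conceptual point to state clearly is that replacing the exact BPP void probability $(1-\Psi)^N$ by the PPP void probability $e^{-N\Psi}$ is exactly the Poisson limit theorem, valid as $N\to\infty$ with $\lambda$ held fixed; no heavier machinery is required.
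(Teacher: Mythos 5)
Your proposal is correct and follows essentially the same route as the paper: the paper's proof is exactly the void-probability computation $\tilde{F}_R(r)=1-e^{-\lambda|\Ag(r)|}$ followed by substitution of $|\Ag(r)|$ and $\lambda$ and differentiation. The only nitpick is at the boundary: under the approximation $\tilde{F}_R(r)\to 1-e^{-N}$ as $r\to\rmax^-$, not $1$, so the piecewise definition introduces a (negligible, $O(e^{-N})$) jump rather than exactly "saturating" --- a standard artifact of the Poisson approximation that the paper also silently accepts.
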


\begin{proof}[Proof:\nopunct]
    With the PPP approximation of GEO satellite positions, we approximate the CDF of $R$ as
    \begin{align}\label{eq:CDFR_approx_1}
        \tilde{F}_R(r) = \P[R\le r] = 1-\P[R>r] = 1-e^{-\lambda |\Ag(r)|}.
    \end{align}
    By substituting \eqref{eq:lenBr} and $\lambda=\frac{N}{2\pi(\re+\ag)}$ here, the approximated CDF is obtained. 
    The PDF is directly given by taking the derivative of the CDF, which completes the proof.
\end{proof}

\begin{lem}\label{lem:CDFR0_approx}
    The approximated CDF and PDF of $R_0$, denoted by $\tilde{F}_{R_0}(r)$ and $\tilde{f}_{R_0}(r)$, are obtained by substituting the CDF and PDF of $R$ in Lemma \ref{lem:CDFR_approx} 
    into \eqref{eq:CDFR0} and \eqref{eq:PDFR0} as
    \begin{align}\label{eq:CDFR0_approx}
    \tilde{F}_{R_0}(r) =
        \begin{cases} 
            0, & \mbox{if  } r<\rmin,\\
            \frac{ 1-e^{-N \Psi(r,\phi)}}{ 1-e^{-N \Psi(\romax,\phi)}}, & \mbox{if  } \rmin \le r < \romax,\\
            1, & \mbox{otherwise}
        \end{cases}
    \end{align}
    and 
    \begin{align}\label{eq:PDFR0_approx}
    &\tilde{f}_{R_0}(r) \nonumber\\
    &=
        \begin{cases} 
        \frac{2 N / \pi}{ 1-e^{-N \Psi(\romax,\phi)}} \cdot \frac{ r  e^{-N \Psi(r,\phi)}}{\sqrt{v_1-\left(v_2-r^2\right)^2}}, & \mbox{if  } \rmin \le r < \romax,\\
        0, & \mbox{otherwise}.
        \end{cases}
\end{align}
\end{lem}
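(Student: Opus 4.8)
The plan is to carry out exactly the substitution the statement prescribes and then articulate why that substitution is legitimate in the PPP regime, rather than re-deriving everything from the void probability of $\bar{\BPPg}$. The natural starting point is the exact relationship established in Lemma~\ref{lem:CDFR0}, namely $F_{R_0}(r)=F_R(r)/F_R(\romax)$ and $f_{R_0}(r)=f_R(r)/F_R(\romax)$ on $\rmin\le r<\romax$. The key observation is that this relationship is purely structural: the serving satellite is the nearest \emph{visible} satellite, and since the visible arc coincides with $\Ag(\romax)$, conditioning on the nearest satellite being visible is precisely conditioning on the event $\{R\le\romax\}$. Hence $R_0$ is distributed as $R$ restricted to $[\rmin,\romax]$, an identity that depends only on the visibility constraint and not on whether the underlying point process is the BPP or its Poisson limit. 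This justifies re-using the same algebraic form with the approximated marginals of $R$.

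Concretely, first I would invoke Lemma~\ref{lem:CDFR_approx} to insert $\tilde{F}_R(r)=1-e^{-N\Psi(r,\phi)}$ and $\tilde{f}_R(r)=2rNe^{-N\Psi(r,\phi)}/\bigl(\pi\sqrt{v_1-(v_2-r^2)^2}\bigr)$ on the relevant interval. Next I would evaluate the normalizing constant by direct substitution $r=\romax$, obtaining $\tilde{F}_R(\romax)=1-e^{-N\Psi(\romax,\phi)}$. Dividing $\tilde{F}_R(r)$ by this constant yields \eqref{eq:CDFR0_approx}, and dividing $\tilde{f}_R(r)$ by the same constant, after pulling the factor $2N/\pi$ out front, yields \eqref{eq:PDFR0_approx}. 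The boundary behavior—value $0$ below $\rmin$ and value $1$ at and above $\romax$—is inherited unchanged from the piecewise definitions in \eqref{eq:CDFR0} and \eqref{eq:PDFR0}.

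There is no substantive analytical obstacle here; the computation reduces to a single division once Lemma~\ref{lem:CDFR_approx} is in hand. The only point that genuinely deserves an explicit remark is the transfer of the conditioning identity $F_{R_0}=F_R/F_R(\romax)$ to the PPP approximation. I would address this by noting that the identity was obtained in Lemma~\ref{lem:CDFR0} as an algebraic consequence of the visibility constraint alone, so replacing $F_R$ by its Poisson-limit approximation $\tilde{F}_R$ is consistent and, because the same quantity appears in numerator and normalizer, automatically preserves normalization to a valid probability distribution supported on $[\rmin,\romax]$. This completes the argument.
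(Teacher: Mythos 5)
Your proposal is correct and follows essentially the same route as the paper, which likewise obtains the result by repeating the conditioning argument of Lemma~\ref{lem:CDFR0} (i.e., $F_{R_0}(r)=F_R(r)/F_R(\romax)$ on $[\rmin,\romax)$) with the PPP-approximated $\tilde{F}_R$ and $\tilde{f}_R$ from Lemma~\ref{lem:CDFR_approx} substituted in. Your added remark justifying why the conditioning identity carries over to the Poisson limit is a harmless elaboration of what the paper leaves implicit.
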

The proof of this lemma is complete by following the same steps as in the proof of Lemma \ref{lem:CDFR0}.
These distance distributions are much simpler than the BPP-based results obtained in the previous section. Thus, the simplified CDFs and PDFs will play a crucial role in reducing the computational complexity of evaluating the coverage probability. Also, the simplified Laplace transform is provided next. 
\begin{lem}\label{lem:LI_approx}
    When the satellites are distributed according to the PPP $\bar{\BPPg}$ with the density of $\lambda=\frac{N}{2\pi(\re+\ag)}$, and the effective antenna gains for the interfering satellites are equal, i.e., $\Gn=\bar{G}\ \forall n \neq 0$ for an arbitrary constant $\bar{G}$, the Laplace transform of the aggregated interference power is derived as
    \begin{align}\label{eq:LI_approx}
    &\tilde{\mathcal{L}}_{I|r_0}(s)
        =e^{-\frac{2N}{\pi} \left(\Omega_1(\romax)-\Omega_1(r_0) - \Omega_2(s,r_0) \right) }
    \end{align}
    where $\Omega_1(r)=-\frac{1}{2}\tan^{-1}\left(\frac{v_2 - r^2}{\sqrt{v_1-(v_2-r^2)^2}}\right)$ and $\Omega_2(s,r_0)=\int_{r_0}^{\romax} \frac{1}{\left(\frac{s r^{-\alpha}}{m \omega}+1\right)^{m}} \frac{ r dr}{\sqrt{v_1-\left(v_2-r^2\right)^2}}$ with $\omega = \frac{16 \pi^2\fc^2}{\Pt \bar{G} c^2}$.
\end{lem}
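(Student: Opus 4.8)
The plan is to compute $\tilde{\mathcal{L}}_{I|r_0}(s)=\E[e^{-sI}]$ directly from the probability generating functional (PGFL) of the PPP $\bar{\BPPg}$, which turns the interference Laplace transform into the exponential of a single integral over the region where interferers may lie. Conditioned on $R_0=r_0$, that region is $\Agvis\cap\Ag(r_0)^{\mathrm{c}}$, i.e., the visible arc lying strictly beyond the serving satellite, which corresponds to interferer distances $r\in[r_0,\romax]$. So the whole computation reduces to evaluating one distance integral and recognizing its two pieces as $\Omega_1$ and $\Omega_2$.

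First I would invoke Campbell's theorem for the marked PPP, treating the i.i.d. fading gains $h_n$ as marks independent of the point locations. This yields $\tilde{\mathcal{L}}_{I|r_0}(s)=\exp\!\big(-\lambda\int_{\Agvis\cap\Ag(r_0)^{\mathrm{c}}}(1-\E_h[e^{-s\Pt\bar{G}h\ell(\mathbf{x})}])\,d\mathbf{x}\big)$. Next I would carry out the fading expectation: under Nakagami-$m$ fading the gain $h$ is $\mathrm{Gamma}(m,m)$, so its MGF is $\E_h[e^{-th}]=(1+t/m)^{-m}$. Substituting $t=s\Pt\bar{G}\ell(\mathbf{x})=s\,r^{-\alpha}/\omega$ with $\omega=16\pi^2\fc^2/(\Pt\bar{G}c^2)$ reproduces exactly the kernel $(1+s r^{-\alpha}/(m\omega))^{-m}$ that appears inside $\Omega_2$, so no further manipulation of the fading is needed.

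The key step is the change of variables from the arc to the scalar distance $r$. Using $|\Ag(r)|=2\pi(\re+\ag)\Psi(r,\phi)$ from \eqref{eq:lenBr}, the arc-length element maps as $d|\Ag(r)|=2\pi(\re+\ag)\,\frac{d\Psi}{dr}\,dr$, and differentiating $\Psi$ (the same computation that underlies the PDFs in Lemmas~\ref{lem:CDFR}--\ref{lem:CDFRn}) gives $\frac{d\Psi}{dr}=\frac{2r}{\pi\sqrt{v_1-(v_2-r^2)^2}}$. Since $\lambda=N/(2\pi(\re+\ag))$, the circumference factor $2\pi(\re+\ag)$ cancels exactly, leaving the exponent as $\frac{2N}{\pi}\int_{r_0}^{\romax}\big(1-(1+s r^{-\alpha}/(m\omega))^{-m}\big)\frac{r\,dr}{\sqrt{v_1-(v_2-r^2)^2}}$. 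Splitting the integrand, the term carrying $-(1+\cdots)^{-m}$ is by definition $-\Omega_2(s,r_0)$, while the constant-$1$ term $\int_{r_0}^{\romax}\frac{r\,dr}{\sqrt{v_1-(v_2-r^2)^2}}$ I would evaluate via $w=v_2-r^2$ to get $-\tfrac12\sin^{-1}(w/\sqrt{v_1})$; applying $\sin^{-1}(x)=\tan^{-1}(x/\sqrt{1-x^2})$ recasts the antiderivative as $\Omega_1(r)$, so this term equals $\Omega_1(\romax)-\Omega_1(r_0)$. Collecting the three contributions produces the claimed exponent $-\frac{2N}{\pi}\big(\Omega_1(\romax)-\Omega_1(r_0)-\Omega_2(s,r_0)\big)$.

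I expect the only genuinely delicate point to be this change of variables: one must confirm that the reflection symmetry of the arc about the terminal's meridian is already absorbed into the definition of $\Psi$, so that no spurious factor of two is introduced, and that $\lambda$ precisely cancels the orbit circumference. Everything after that is routine, the closed-form integral relying only on the standard $\sin^{-1}$--$\tan^{-1}$ identity, and $\Omega_2$ being deliberately left unevaluated.
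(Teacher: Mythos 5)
Your proposal is correct and follows essentially the same route as the paper's proof in Appendix G: the PGFL/Campbell exponential form for the PPP over $\Agvis\cap\Ag(r_0)^{\mathrm{c}}$, the Nakagami (Gamma) MGF giving the kernel $(1+s r^{-\alpha}/(m\omega))^{-m}$, the change of variables via $\frac{d|\Ag(r)|}{dr}=\frac{4r(\re+\ag)}{\sqrt{v_1-(v_2-r^2)^2}}$ so that $\lambda$ cancels the circumference and leaves the $\frac{2N}{\pi}$ prefactor, and the split of the exponent into $\Omega_1(\romax)-\Omega_1(r_0)$ and $\Omega_2(s,r_0)$. Your explicit evaluation of the $\Omega_1$ antiderivative via $w=v_2-r^2$ and the $\sin^{-1}$--$\tan^{-1}$ identity is consistent with the paper, which instead just verifies $\frac{d}{dr}\Omega_1(r)=\frac{r}{\sqrt{v_1-(v_2-r^2)^2}}$.
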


\begin{proof}[Proof:\nopunct]
   See Appendix \ref{App:LI_approx_proof}.
\end{proof}

\begin{rem}
    Unlike the Laplace transform of the aggregated interference power for the binomially distributed satellites, given in Lemma \ref{lem:LI}, the approximated Laplace transform in Lemma \ref{lem:LI_approx} does not rely on any distance distribution thanks to the stochastic property of the PPP $\bar{\BPPg}$. 
\end{rem}

Using the simplified distance distributions and Laplace transform, the coverage probability is obtained in the following theorem.
\begin{thm}\label{thm:Pcov_PPP}
    When GEO satellites are distributed according to a PPP $\bar{\BPPg}$ with a density of $\lambda=\frac{N}{2\pi(\re+\ag)}$, the coverage probability is given by
    \begin{align}\label{eq:Pcov_approx}
    \tilde{P}_{\mathrm{cov}}(\tau;m) \approx \frac{2N/\pi}{1-e^{-N \Psi(\romax,\phi)}}  \sum_{i=1}^{m} \binom{m}{i} (-1)^{i+1}\Xi_i(\tau;m)
    \end{align}
    where 
    \begin{align}
        \Xi_i(\tau;m) = \int_{\rmin}^{\romax}\!\frac{r e^{-\Theta_i(r,\tau;m)} }{\sqrt{v_1-\left(v_2-r^2\right)^2}} dr
    \end{align}
    with
    $\Theta_i(r,\tau;m) 
        = N \Psi(r,\phi) + \nu i \uo N_0 W \tau r^{\alpha}+ \frac{2N}{\pi} (\Omega_1(\romax)-\Omega_1(r_0) - \Omega_2(\nu i \uo \tau r^{\alpha},r_0) ).$
\end{thm}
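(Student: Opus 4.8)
The plan is to reproduce the argument behind Theorem~\ref{thm:Pcov} while substituting the Poisson-based ingredients from Lemmas~\ref{lem:CDFR0_approx} and~\ref{lem:LI_approx} in place of their binomial analogues. Starting from $\tilde{P}_{\mathrm{cov}}(\tau;m)=\P[\SINR\ge\tau]$ and noting that the $\SINR$ vanishes in Case~1, I would condition on the serving distance $R_0=r$ and average the conditional coverage against the PPP serving-distance density $\tilde{f}_{R_0}$ of \eqref{eq:PDFR0_approx}. Writing $\ell(\mathbf{x}_0)=(c/4\pi\fc)^2 r^{-\alpha}$ and using $\uo=16\pi^2\fc^2/(\Pt\Go c^2)$, the event $\{\SINR\ge\tau\}$ is equivalent to $\{h_0\ge \uo\tau r^{\alpha}(I+N_0 W)\}$, so the task reduces to evaluating the complementary CDF of the Nakagami-$m$ gain $h_0$ at this interference-dependent argument and then integrating over $r$.

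For the conditional coverage I would invoke the standard tight approximation of the Nakagami-$m$ complementary CDF, $\P[h_0\ge x]\approx 1-(1-e^{-\nu x})^m$ with $\nu=m(m!)^{-1/m}$, and expand the binomial to get $\P[h_0\ge x]\approx\sum_{i=1}^{m}\binom{m}{i}(-1)^{i+1}e^{-\nu i x}$. Substituting $x=\uo\tau r^{\alpha}(I+N_0 W)$ splits each summand into a deterministic noise factor $e^{-\nu i\uo N_0 W\tau r^{\alpha}}$ and an interference factor $e^{-\nu i\uo\tau r^{\alpha} I}$; taking the expectation over $I$ converts the latter into the Laplace transform evaluated at $s=\nu i\uo\tau r^{\alpha}$, which by Lemma~\ref{lem:LI_approx} is the closed form $\tilde{\mathcal{L}}_{I|r_0=r}(\nu i\uo\tau r^{\alpha})$ in \eqref{eq:LI_approx}. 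This is where the PPP pays off: unlike the binomial Laplace transform of Lemma~\ref{lem:LI}, the PPP version is distribution-free and already exponential in the $\Omega_1,\Omega_2$ terms.

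Finally I would assemble the pieces. Inserting $\tilde{f}_{R_0}$ from \eqref{eq:PDFR0_approx} and $\tilde{\mathcal{L}}_{I|r_0=r}$ from \eqref{eq:LI_approx}, each summand collects three exponential contributions---the factor $e^{-N\Psi(r,\phi)}$ carried by $\tilde{f}_{R_0}$, the noise exponential, and the $\Omega$-exponent of the Laplace transform---whose negated sum is precisely $\Theta_i(r,\tau;m)$ with $r_0=r$. Pulling the common normalization $\frac{2N/\pi}{1-e^{-N\Psi(\romax,\phi)}}$ of $\tilde{f}_{R_0}$ out of the integral and naming the remaining radial integral $\Xi_i(\tau;m)$ yields the claimed expression. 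The derivation is largely mechanical once Lemmas~\ref{lem:CDFR0_approx} and~\ref{lem:LI_approx} are in hand; the only delicate point, and the step I expect to require the most care, is the exponent bookkeeping---verifying that the three exponentials combine exactly into $\Theta_i$ and that the $1-e^{-N\Psi(\romax,\phi)}$ normalization is carried into the prefactor rather than cancelled---so that the compact form involving $\Xi_i$ emerges.
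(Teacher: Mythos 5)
Your proposal is correct and follows exactly the paper's own (one-line) proof: repeat the derivation of Theorem~\ref{thm:Pcov} with the PPP-based ingredients \eqref{eq:Pvis_approx}, \eqref{eq:PDFR0_approx}, and \eqref{eq:LI_approx}, collect the three exponentials into $\Theta_i$, and pull out the prefactor of $\tilde{f}_{R_0}$. The one bookkeeping point you flag is real: if you multiply by the visible probability $1-e^{-N\Psi(\romax,\phi)}$ from \eqref{eq:Pvis_approx} as the proof of Theorem~\ref{thm:Pcov} does, it cancels the denominator of $\tilde{f}_{R_0}$ and leaves just $2N/\pi$, so the prefactor as printed in \eqref{eq:Pcov_approx} corresponds to your ``carried rather than cancelled'' reading and is worth double-checking against the paper's intent.
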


\begin{proof}[Proof:\nopunct]
The proof is complete by following the proof of Theorem \ref{thm:Pcov} with the approximated results  \eqref{eq:Pvis_approx}, \eqref{eq:PDFR0_approx}, and \eqref{eq:LI_approx}.
\end{proof}

Although the expression in Theorem \ref{thm:Pcov_PPP} has the integral term in $\Xi_i(\tau;m)$, it is much easier to calculate than that in Theorem \ref{thm:Pcov} due to the simplified Laplace transform.

\section{Simulation Results}\label{sec:Sim_results}

\begin{table}
    \caption{Simulation Parameters}\label{Table:Sim_Param}
    \centering
    \begin{tabular}{|l|c|}
     \hline 
     Parameter & Value\\
     \hline\hline
     Radius of Earth $r_{\mathrm{e}}$ & 6,378 km\\
     \hline
     Speed of light $c$  & $3 \times 10^8$ m/s\\
     \hline
     Noise spectral density $N_0$ & $-174$ dBm/Hz\\
     \hline

     Altitude $a$ & 35,786 km\\
     \hline
     Carrier frequency $f_{\mathrm{c}}$ & 2 GHz \\
     \hline
     
     Path-loss exponent $\alpha$ & 3\\
     \hline
     Transmit antenna gain of the serving satellite $\Gto$ & 51 dBi\\
     \hline
     Receive antenna gain of the terminal $\Gr$ & 0 dBi\\
     \hline
     EIRP density & 59 dBW/MHz\\
     \hline
     Bandwidth $W$ & 30 MHz\\
     \hline
    \end{tabular}
\end{table}

In this section, we numerically verify the derived results based on the simulation parameters listed in Table \ref{Table:Sim_Param} unless otherwise stated.
The handheld terminals are considered for the S-band as in the 3GPP standardization [\ref{Ref:3GPP_38.821}]. 
With the assumed effective isotropically radiated power (EIRP) density, which is calculated as $\Pt \Gto/W$, we obtain the transmit power of the satellites as $\Pt =52.77$ dBm. 
The typical terminal is located in Seoul, South Korea, i.e., $\{\phi,\theta\}=\{37,137\}$ deg.


Fig. \ref{Fig:comparison} compares our analysis with the simulation results considering the actual GEO satellites. The positions of the actual GEO satellites, depicted in Fig.~\ref{Fig:comparison}~\subref{Fig:geo_scatter}, are calculated from the two-line element dataset given in https://celestrak.org/ on October 21, 2023. In this dataset, we consider the satellites with an inclination less than 1 degree among all the actual GEO satellites in geosynchronous orbits. As a result, the number of considered satellites is $N=391$.
In Fig.~\ref{Fig:comparison}~\subref{Fig:avg_num_vs_phi}, we compare the average number of visible satellites for the actual and BPP-based satellite distributions. For the actual distribution, the average number is calculated by averaging the number of visible satellites over all longitudes at each latitude.
For the BPP-based distribution, the  average number comes from $\E[\BPPg(\Agvis)]=N \PsuccV$ given in Remark~\ref{rem:pvis}.
It is shown that our approach to model GEO satellites is fairly reasonable because the average number of actually visible satellites at a terminal is almost the same as our analysis.
This alignment is mainly achieved by displacing adjacent GEO satellites at a certain distance to be almost evenly distributed in the geostationary orbit in order to minimize interference with other satellites.

\begin{figure*}
\centering
\subfigure[The distribution of actual GEO satellites on $xy-$plane\protect\footnotemark]{
\includegraphics[width=0.93\columnwidth]{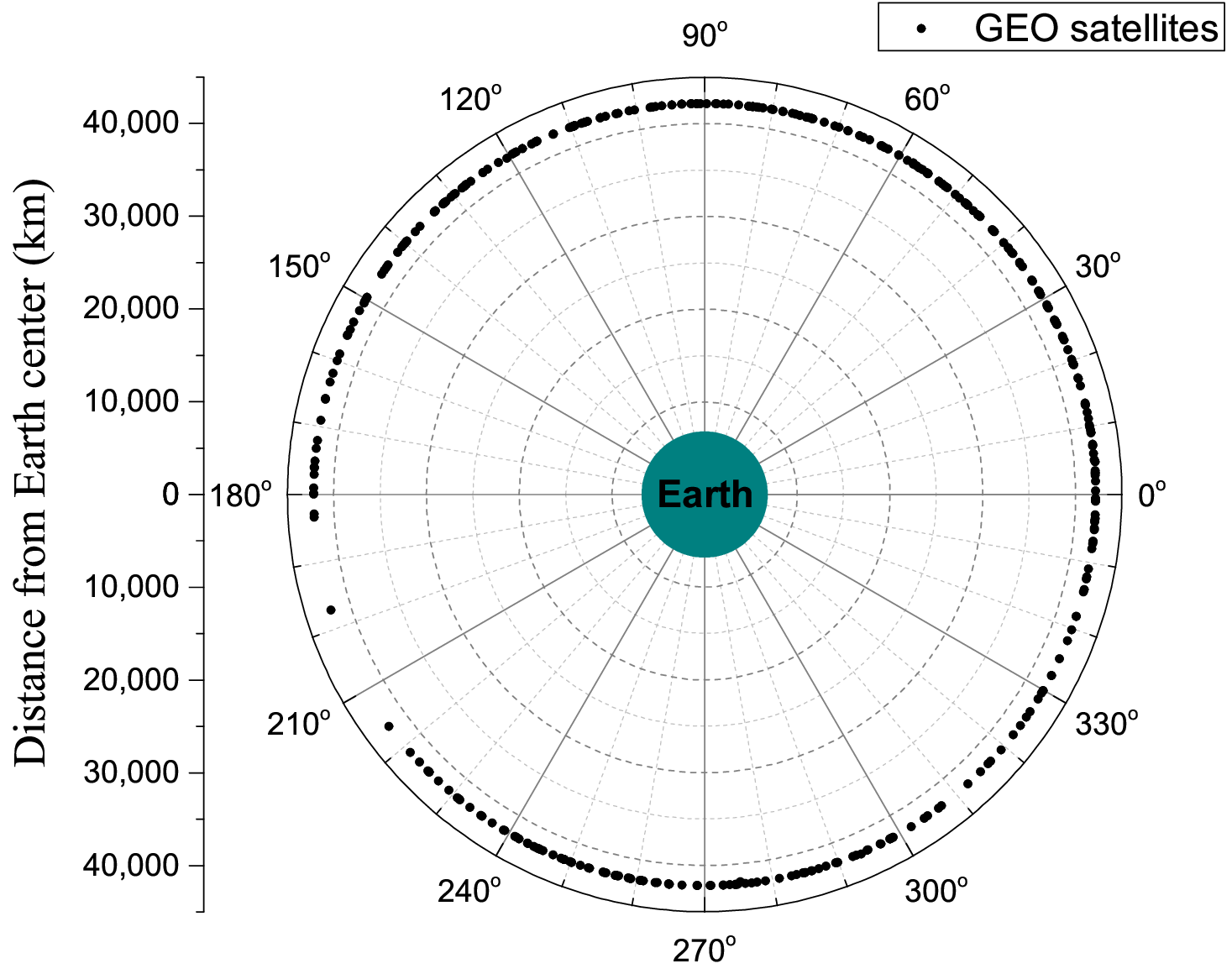}
\label{Fig:geo_scatter}
}
\subfigure[The average number of visible satellites when $N=391$]{
\includegraphics[width=0.98\columnwidth]{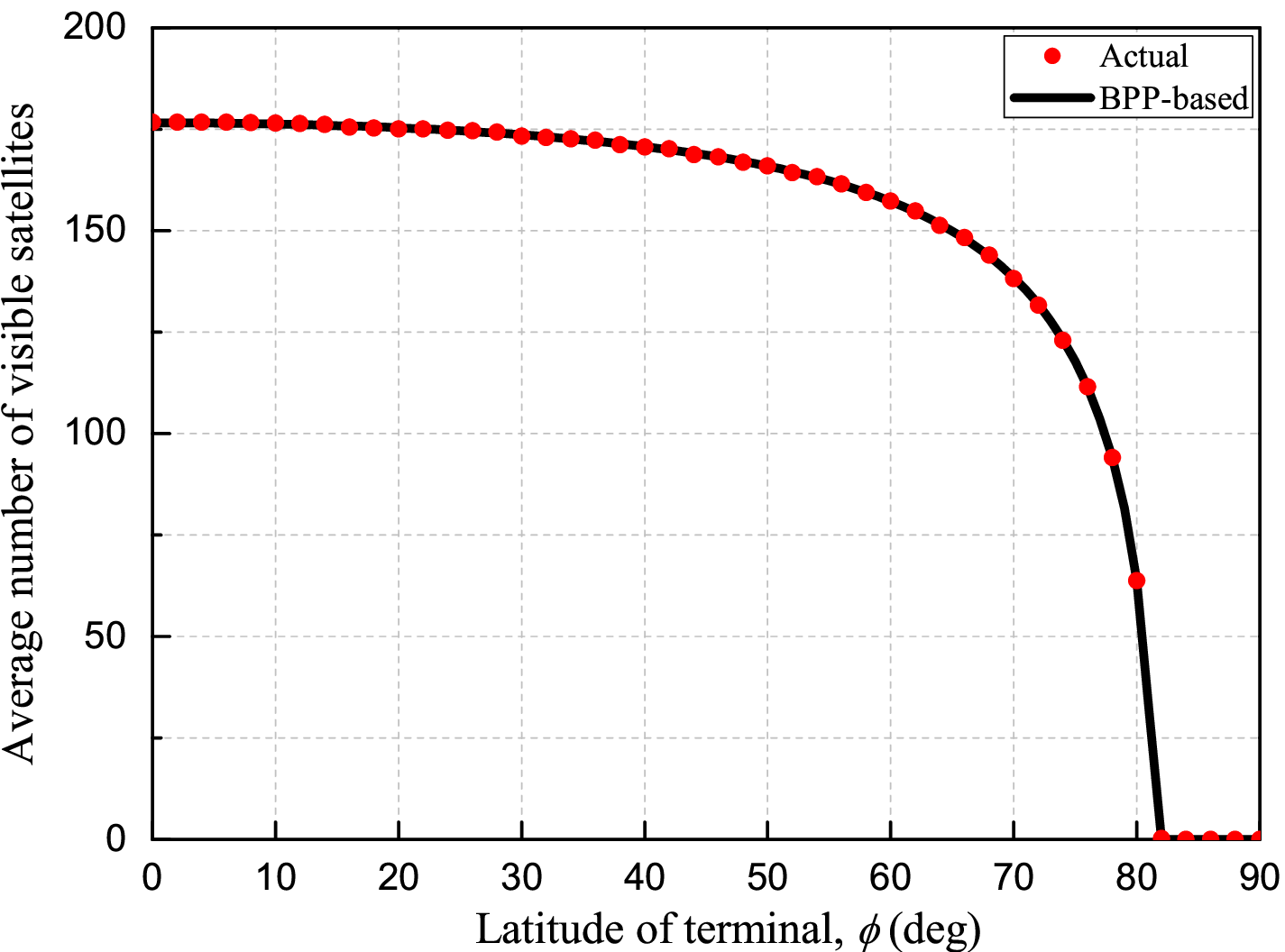}
\label{Fig:avg_num_vs_phi}
}
\caption{Comparison between the actual and the BPP-based GEO satellite distributions.}
\label{Fig:comparison}
\end{figure*}
\footnotetext{In Fig.~\ref{Fig:comparison}~\subref{Fig:geo_scatter}, a small number of GEO satellites are positioned in the areas with the longitude from 180 to 220 degrees ($140^{\circ}$W-$180^{\circ}$W). These areas encompass the Alaska and Pacific region where the demand for communication services is scarce. In the future, additional GEO satellites may be deployed in these areas to accommodate potential service needs.}

\begin{figure*}
\centering
\subfigure[$N=100$, $\Go/\Gn=20$ dB]{
\includegraphics[width=.97\columnwidth]{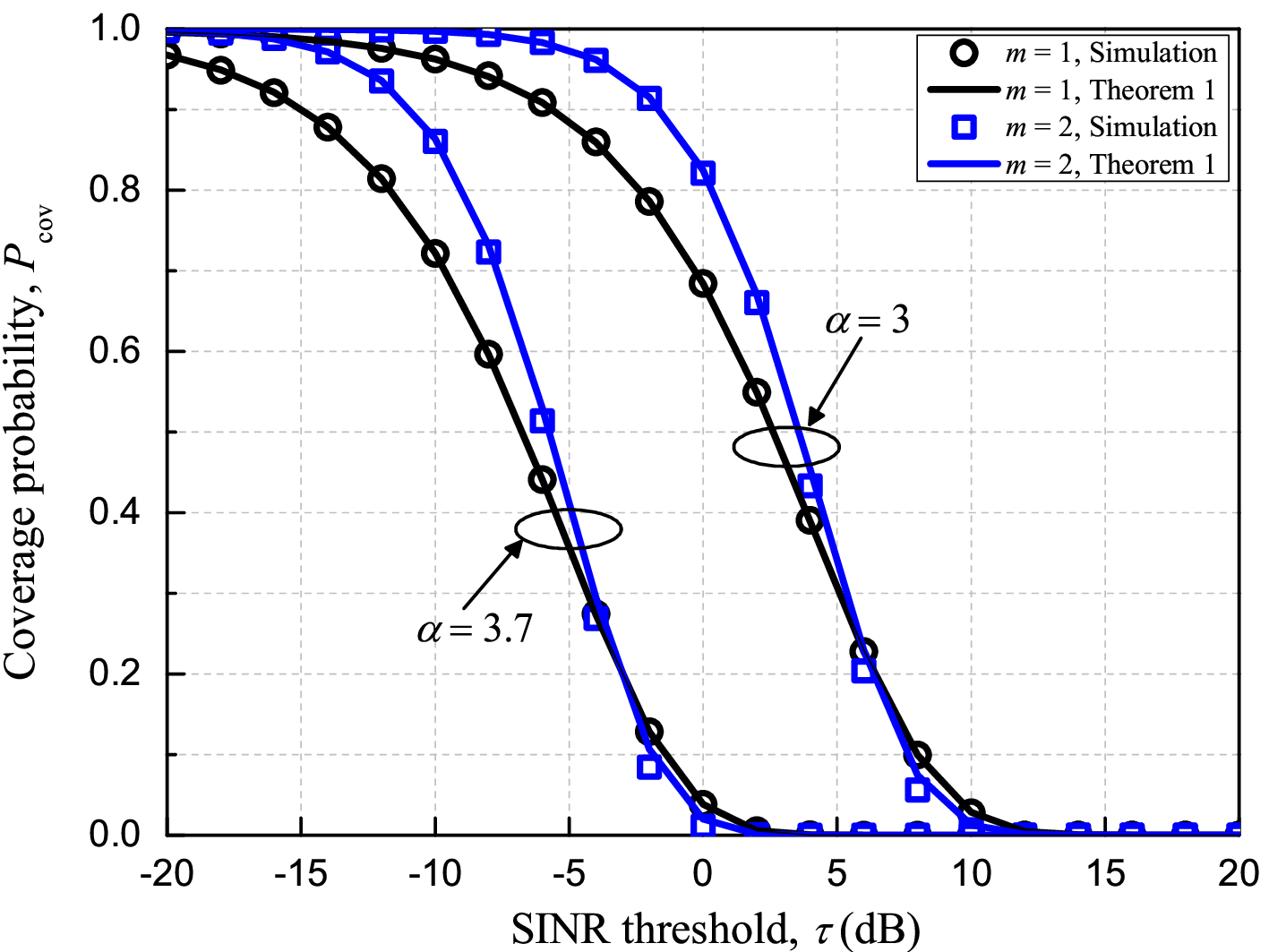}
\label{Fig:Pcov_vs_tau}
}
\subfigure[$m=2$, $\Go/\Gn=30$ dB]{
\includegraphics[width=.97\columnwidth]{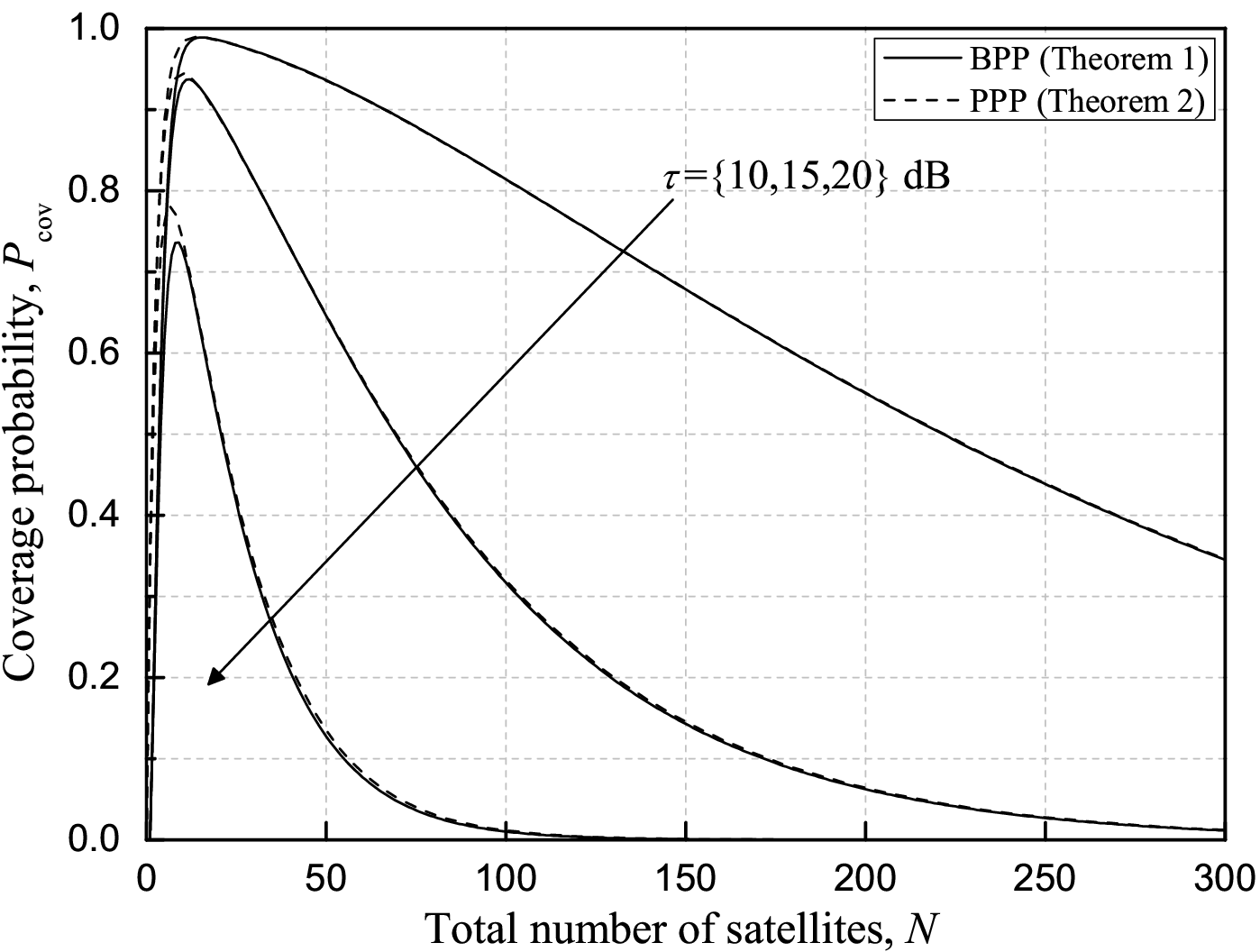}
\label{Fig:Pcov_vs_N}
}
\caption{The coverage probabilities versus the SINR threshold $\tau$ and the number of GEO satellites $N$.}
\label{Fig:Pcov}
\end{figure*}

Fig. \ref{Fig:Pcov} shows the numerical results of coverage probabilities. 
The BPP-based analytical results are given from Theorem~\ref{thm:Pcov}, while the PPP-based analysis comes from Theorem \ref{thm:Pcov_PPP}. 
In Fig.~\ref{Fig:Pcov}~\subref{Fig:Pcov_vs_tau}, the coverage probability in Theorem \ref{thm:Pcov} provides a fairly close performance to the simulation results for various path loss exponents $\alpha=\{3,3.7\}$, verifying the effectiveness of the BPP-based modeling of the GEO distribution.
Fig.~\ref{Fig:Pcov}~\subref{Fig:Pcov_vs_N} shows the coverage probability versus the number of GEO satellites $N$. As $N$ increases, the coverage probability first increases until $N$ reaches a certain value, and then decreases. When the number of GEO satellites is small, deploying additional satellites enhances system performance by increasing the satellite visibility and the received signal-to-noise ratio from the serving satellite.
However, for a large number of satellites, the presence of more satellites can lead to increased interference, resulting in a degradation of coverage performance. 
The approximated coverage probability in Theorem 2 is fairly similar to the one in Theorem 1, especially for high $N$, which verifies the Poisson limit theorem-based approximation.

Fig. \ref{Fig:Pcov_vs_phi} shows the coverage probability versus the latitude of the terminal. As we expected, the coverage probability highly relies on the terminal's latitude. This is because the geostationary orbit is on the equatorial plane, resulting in unequal satellite visibility from different latitudes. This phenomenon successfully explains the fact that GEO satellites cannot provide any coverage to polar regions due to their inherent orbital characteristics. Furthermore, unlike the case with a relatively small number of satellites, e.g., $N=10$, when there are a large number of satellites, e.g., $N=100$ or $200$, the regions with high latitudes, e.g., $|\phi|>60$ degrees, have the higher coverage performance compared to those near the equator. This is mainly because when many GEO satellites interfere with each other, the region that sees a shorter visible arc of the geostationary orbit has better performance due to less interference. Despite this fact, when the sidelobe of the satellites' beam patterns, the main factor causing interference, is designed to be sufficiently low, e.g., $\Go/\Gn=30$ dB, the coverage performance around all latitudes is enhanced, and simultaneously the performance gap between different latitudes decreases.
In addition, the terminals at the equator, i.e., $\phi=0$ degrees, achieve slightly higher coverage performance than those near the equator because of smaller path losses and better satellite visibility.

\begin{figure*}
\centering
\subfigure[$\Go/\Gn=20$ dB]{
\includegraphics[width=.98\columnwidth]{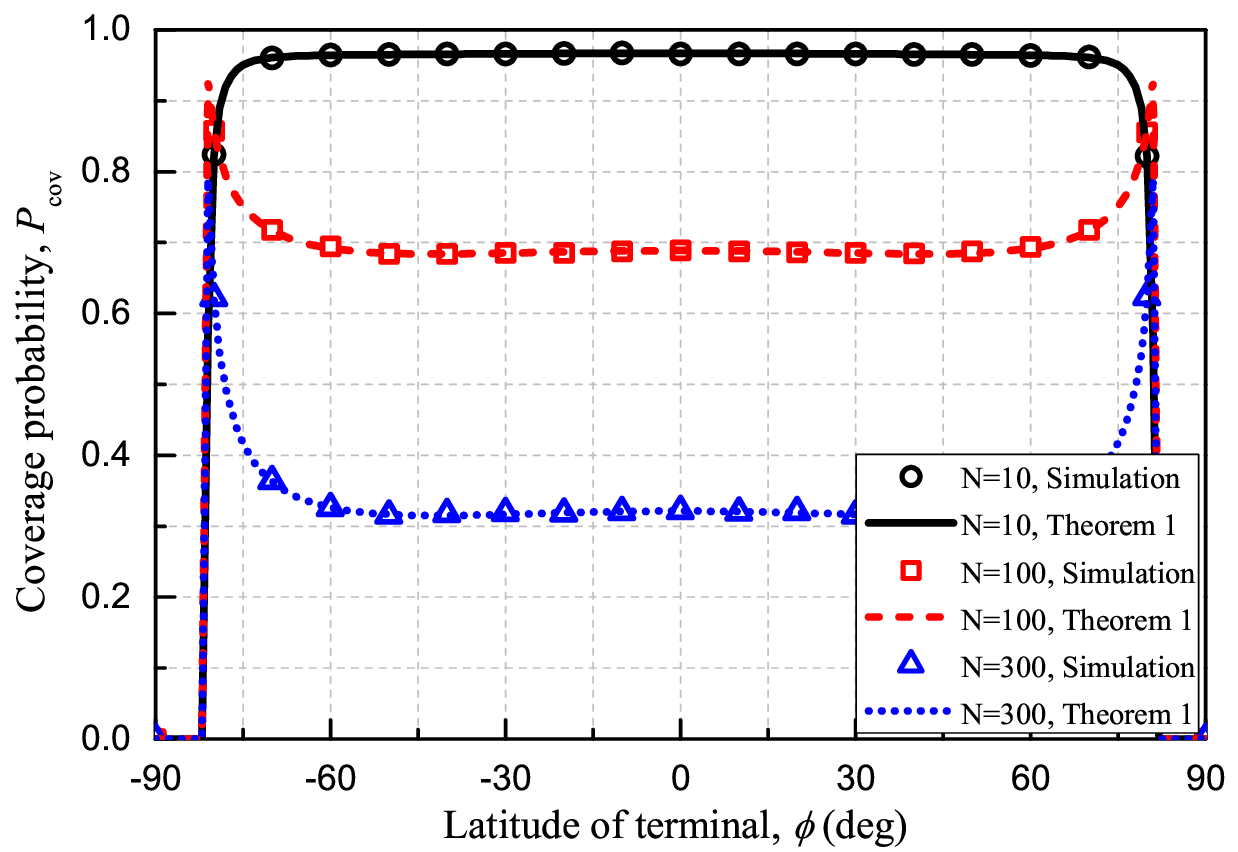}
\label{Fig:Pcov_vs_phi_gbar=20}
}
\subfigure[$\Go/\Gn=30$ dB]{
\includegraphics[width=.98\columnwidth]{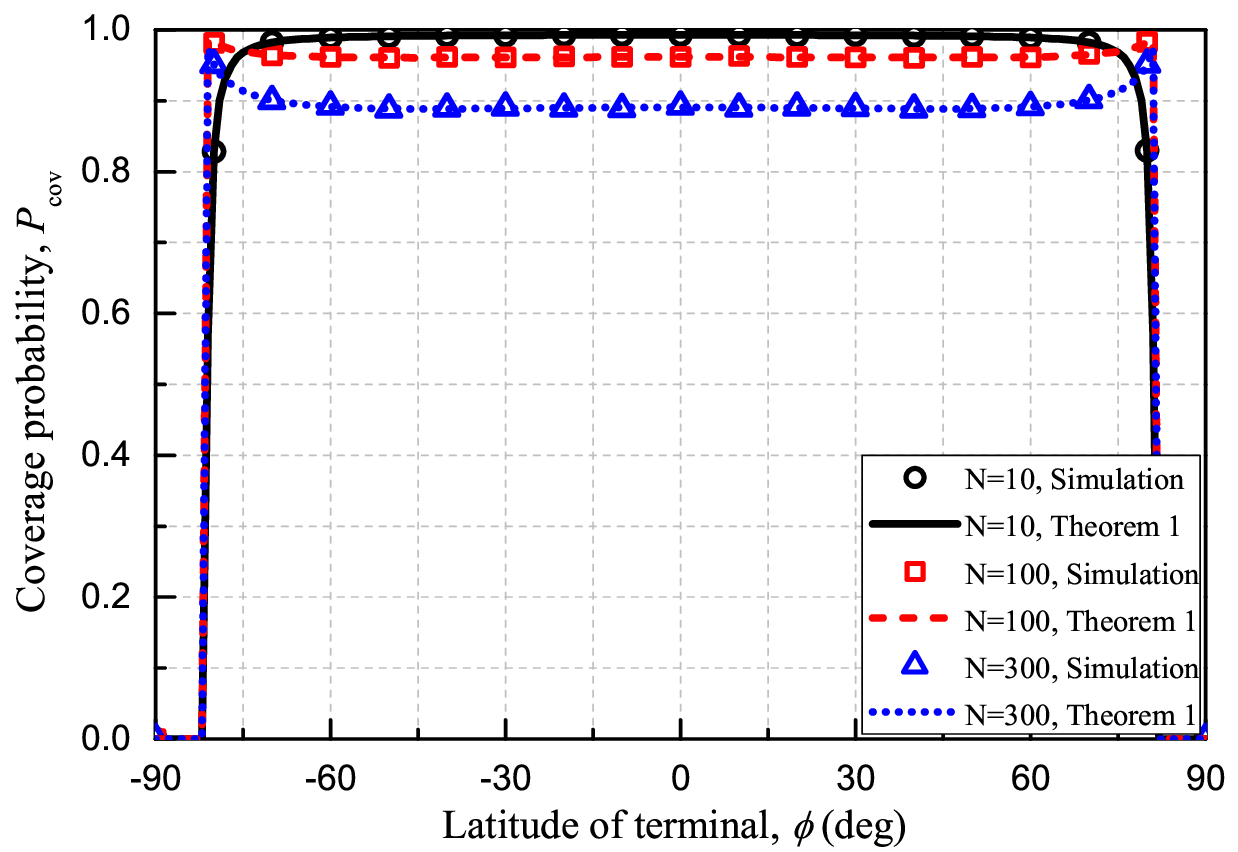}
\label{Fig:Pcov_vs_phi_gbar=30}
}
\caption{The coverage probability versus the terminal's latitude for various antenna gain ratios $\{\Go/\Gn\}=\{20,30\}$ dB with $m=1$ and $\tau=0$ dB.}
\label{Fig:Pcov_vs_phi}
\end{figure*}

\section{Conclusions}\label{sec:Conclusions}
In this paper, we investigated a novel approach to model the distribution of geosynchronous Earth orbit (GEO) satellites according to a binomial point process. We analyzed the distance distributions and the probabilities of distribution cases for the serving satellite. We also derived the coverage probability, and the approximated expression was obtained by using the Poisson limit theorem. Simulation results well matched the derived expressions, and the approximate performance was fairly close to the actual system performance. The impacts of the signal-to-interference-plus-noise ratio threshold, the number of GEO satellites, and the latitude of the terminal were discussed in terms of coverage probabilities. The analytical results are expected to give a fundamental framework for understanding GEO satellite networks and offer guidance when designing practical techniques for the heterogenous satellite communication systems.

\appendices

\section{Proof of Lemma \ref{lem:Pcases}}\label{App:Pcases}
    Using the finite-dimensional distribution of the BPP $\BPPg$, the probability that $q$ satellites are positioned in the visible arc $\Agvis$ is given by [\ref{Ref:Jung4}]
    \begin{align}\label{eq:fdd}
        \P&[\BPPg(\Agvis)=q, \: \BPPg(\Agvis^{\mathrm{c}})=N-q]
        =\binom{N}{q} \PsuccV^q (1-\PsuccV)^{N-q}
    \end{align}
    where 
    $\Agvis^{\mathrm{c}}$ is the invisible arc, i.e., the arc under the horizontal plane, whose length is $|\Agvis^{\mathrm{c}}|=|\Ag|-|\Agvis|$. 
    By substituting $q=0$ and $1$ in \eqref{eq:fdd}, we can obtain  $\P[\BPPg(\Agvis)=0]$ and
    $\P[\BPPg(\Agvis)=~1]$, respectively. The probability of Case 3 can be given by $\P[\BPPg(\Agvis)>1]=1-\P[\BPPg(\Agvis)=~0]-\P[\BPPg(\Agvis)=1]$, which completes the proof.

\section{Proof of Lemma \ref{lem:CDFR}}\label{App:CDFR}
    Let $D$ denote the distance from the terminal to an arbitrary satellite. Then, the probability that $D$ is less than or equal to $r$ is equivalent to the probability that the satellite is located within $\mathcal{A}(r)$, i.e., the success probability for $\mathcal{A}(r)$, which is given by
    \begin{align}\label{eq:CDFD}
        \P[D\le r]=\frac{|\mathcal{A}(r)|}{|\mathcal{A}|}=\Psi(r,\phi).
    \end{align}
    Since the satellites in $\BPPg$ are independent and identically distributed (i.i.d.), 
    the CDF of $R$ is given by 
    \begin{align}\label{eq:CDFR-1}
        F_R(r) 
        &= 1- \P[R > r] 
        \mathop=\limits^{(a)} 1- \left(1-\P[D \le r]\right)^{N}
    \end{align}
    where ($a$) follows from the independence of the distances to the satellites.
    The CDF is obtained by substituting \eqref{eq:CDFD} into \eqref{eq:CDFR-1}, and the PDF is derived by differentiating the CDF, which completes the proof.

\section{Proof of Lemma \ref{lem:CDFR0}}\label{App:CDFR0}
    Note that the maximum distance between the terminal and the visible satellite is defined as $\romax$. With this definition,
    the CDF of $R_0$ is given by
    \begin{align}\label{eq:CDFR0-1}
        F_{R_0}(r) &= \P\left[R\le r|\BPPg(\Agvis)>0\right]=\frac{\P[R \le r, R \le \romax]}{\P[R\le \romax]}.
    \end{align}
    Using the CDF of $R$ given in Lemma \ref{lem:CDFR}, \eqref{eq:CDFR0-1} is expressed as \eqref{eq:CDFR0}. 
    The PDF is directly obtained by differentiating \eqref{eq:CDFR0}, which completes the proof.

\section{Proof of Lemma \ref{lem:CDFRn}}\label{App:CDFRn}
    We now explore a specific case where given $R_0=r_0$, the distance to a satellite is larger than $r_0$ and less than or equal to $r$. 
    In this case, the satellite is positioned in $\Ag(r)\cap\Ag(r_0)^{\mathrm{c}}$ because the distance to the nearest satellite is already fixed to $R_0=r_0$. The probability of this case is interpreted as the success probability for the arc $\Ag(r)\cap\Ag(r_0)^{\mathrm{c}}$, which can be computed as the ratio of $|\Ag(r)\cap\Ag(r_0)^{\mathrm{c}}|$ to $|\Ag\cap\Ag(r_0)^{\mathrm{c}}|$, i.e.,
    \begin{align}\label{eq:prob_sc}
        \P[r_0 < D & \le r|R_0=r_0] = \frac{|\Ag(r)\cap\Ag(r_0)^{\mathrm{c}}|}{|\Ag\cap\Ag(r_0)^{\mathrm{c}}|} \nonumber\\
        &= \frac{|\Ag(r)|-|\Ag(r_0)|}{|\Ag|-|\Ag(r_0)|} = \frac{\Psi(r,\phi)-\Psi(r_0,\phi)}{1-\Psi(r_0,\phi)}.
    \end{align}

    For a given $R_0=r_0$, the CDF of $R_n$ is given by 
    \begin{align}\label{eq:CDFRn-1}
        F_{R_n|r_0}(r) 
        &= \P[R_n \le r|R_0=r_0]\nonumber\\
        &=\P[D \le r|R_0=r_0, r_0<D \le \romax]\nonumber\\
        &=\frac{\P[D \le r, r_0 < D \le \romax|R_0=r_0]}{\P[r_0<D \le \romax|R_0=r_0]}\nonumber\\
        &=
        \begin{cases} 
            0, & \mbox{if  } r<r_0,\\
             \frac{\P[r_0<D \le r|R_0=r_0]}{\P[r_0<D \le \romax|R_0=r_0]}, & \mbox{if  } r_0 \le r < \romax,\\
            1, & \mbox{otherwise.}
        \end{cases}
    \end{align}
    From \eqref{eq:prob_sc} and \eqref{eq:CDFRn-1}, we can obtain the CDF of $R_n$ given $R_0 = r_0$ as in \eqref{eq:CDFRn}. The PDF is directly obtained by using the derivative of $\Psi(r,\phi)$, given by $ \frac{d \Psi(r,\phi)}{d r} 
        =  \frac{2 r/\pi}{\sqrt{v_1-\left(v_2-r^2\right)^2}}$.
    This completes the proof.

\section{Proof of Lemma \ref{lem:LI}}\label{App:LTG}
 The Laplace transform is derived as 
    \begin{align}\label{eq:LI-1}
    &\mathcal{L}_{I|r_0}(s)
        = \E_{\BPPg,\{h_n\}}\left[\exp\left(-s\sum_{n=1}^{\BPPg(\Agvis\cap\Ag(r_0)^{\mathrm{c}})} \Pt \Gtn \Gr h_n \ell(\mathbf{x}_n)\right)\right]\nonumber\\
        &= \E_{\BPPg,\{h_n\}}\left[\prod_{n=1}^{\BPPg(\Agvis\cap\Ag(r_0)^{\mathrm{c}})}{\exp\left(-s \Pt \Gtn \Gr h_n \ell(\mathbf{x}_n)\right)}\right]\nonumber\\
        &\mathop=\limits^{(a)} \E_{\BPPg,\{R_n\}}\left[\prod_{n=1}^{\BPPg(\Agvis\cap\Ag(r_0)^{\mathrm{c}})}{\mathcal{L}_{h_n}\left( \frac{s}{\un R_n^{\alpha}} \right)}\right]\nonumber\\
        &\mathop=\limits^{(b)} \E_{\BPPg}\left[\prod_{n=1}^{\BPPg(\Agvis\cap\Ag(r_0)^{\mathrm{c}})} \int_{r_0}^{\romax}{\mathcal{L}_{h_n}\left( \frac{s}{\un r_n^{\alpha}} \right)} f_{R_n|r_0}(r_n) dr_n\right]\nonumber\\
        &\mathop=\limits^{(c)} \E_{\BPPg}\left[\prod_{n=1}^{\BPPg(\Agvis\cap\Ag(r_0)^{\mathrm{c}})} \int_{r_0}^{\romax}\left(\frac{m \un r_n^{\alpha}}{s+m \un r_n^{\alpha}}\right)^m f_{R_n|r_0}(r_n) dr_n\right]\nonumber\\
        &\mathop=\limits^{(d)}  
        \sum_{\nI=0}^{N-1} \P\left[\BPPg\left(\Agvis\cap\Ag(r_0)^{\mathrm{c}}\right)=\nI\right] \nonumber\\
        &\quad\times   \prod_{n=1}^{\nI}\int_{r_0}^{\romax}\left(\frac{m \un r_n^{\alpha}}{s+m \un r_n^{\alpha}}\right)^m f_{R_n|r_0}(r_n) dr_n    
    \end{align}
    where ($a$) follows from the i.i.d. distribution of the channel gains $h_n$, ($b$) follows from the i.i.d. distribution of the distances $R_n$, ($c$) follows because $\mathcal{L}_{h_n}(s)=\left(\frac{m}{s+m}\right)^m$, and ($d$) follows from the law of total expectation, i.e., $\E[X]=\sum_i\P[A_i]\E[X|A_i]$. 
    According to Lemma \ref{lem:num_int_sat}, the probability $\P\left[\BPPg\left(\Agvis\cap\Ag(r_0)^{\mathrm{c}}\right)=\nI\right]$ in \eqref{eq:LI-1} is derived as $\binom{N-1}{\nI}\PsuccI^{\nI}\PsuccI^{N-1-\nI}$, which completes the proof.

\section{Proof of Theorem \ref{thm:Pcov}}\label{App:Pcov}
Considering the satellite visible probability identified in Section \ref{sec:visibility_anal}, the coverage probability is given by
    \begin{align}
    \Pcov 
    &=\P[\BPPg(\Agvis)=0]\,\P[\SINR\geq\tau|\BPPg(\Agvis)=0]\nonumber\\
    &\quad + \P[\BPPg(\Agvis)>0]\,\P[\SINR\geq\tau|\BPPg(\Agvis)>0]\nonumber\\
    &\mathop=\limits^{(a)}\P[\BPPg(\Agvis)>0]\,\P[\SINR\geq\tau|\BPPg(\Agvis)>0]
    \end{align}
    where ($a$) follows because when there is no visible satellite, i.e., $\BPPg(\Agvis)=0$, the corresponding coverage probability $\P[\SINR\geq\tau|\BPPg(\Agvis)=0]$ is zero.
    When $\BPPg(\Agvis) > 0$, the coverage probability is derived as
    \begin{align}\label{eq:Pcovt-1}
    &\P[\SINR\geq\tau|\BPPg(\Agvis) > 0]\nonumber\\
        &= \E_{R_0}\left[ \P\left[\frac{\Pt \Gto \Gr h_0 \ell(\mathbf{x}_0)}{N_0 W + I} \geq \tau \,\bigg|\, R_0=r \right]\right]\nonumber\\
        &= \int_{\rmin}^{\romax} \P\left[h_0 \geq \uo (I + N_0 W) \tau r^{\alpha}  \,|\, R_0=r \right]f_{R_0}(r)dr\nonumber\\
        &= \int_{\rmin}^{\romax} \E_I\left[\P\left[h_0 \geq \uo (I + N_0 W) \tau r^{\alpha}  \,|\, R_0=r, I \right]\right]f_{R_0}(r)dr\nonumber\\
        &\mathop \approx \limits^{(a)} \int_{\rmin}^{\romax} \E_I\left[\sum_{i=1}^{m}\binom{m}{i}(-1)^{i+1}e^{-\nu i \uo (I + N_0 W) \tau r^{\alpha}} \; \right]f_{R_0}(r)dr\nonumber\\
        &\!=\! \sum_{i=1}^{m}\!\binom{m}{i}(-1)^{i+1}\!\!\int_{\rmin}^{\romax}  \!\!e^{-\nu i \uo N_0 W \tau r^{\alpha}} \E_I[e^{-\nu i \uo I \tau r^{\alpha}} \;]f_{R_0}(r)dr
    \end{align}
    where ($a$) follows from the approximated CDF of the channel gain $F_{h_n}(x)=1-\sum_{i=1}^{m}\binom{m}{i}(-1)^{i+1}e^{-\nu i x}$ [\ref{Ref:Andrews}].
    From the definition of the Laplace transform, i.e., $\mathcal{L}_X(s) = \E_X[{e^{-sX}}]$, we obtain \eqref{eq:Pcov}, which completes the proof.

\section{Proof of Lemma \ref{lem:LI_approx}}\label{App:LI_approx_proof}
The Laplace transform is derived as
\begin{align}\label{eq:LI_approx_proof}
    &\tilde{\mathcal{L}}_{I|r_0}(s)
        = \E\left[e^{-sI}|R_0 = r_0 \right]\nonumber\\
        &\mathop=\limits^{(a)} \E_{\BPPg}\left[\prod_{n=1}^{\bar{\BPPg}(\Agvis\cap\Ag(r_0)^{\mathrm{c}})}{ \E_{h_n}\left[e^{-s \Pt \Gn h_n \ell(\mathbf{x}_n)}\right] }\right]\nonumber\\
        &\mathop=\limits^{(b)} \exp \biggl(-\lambda \!\!\int_{\mathbf{x}_n \in \bar{\BPPg}(\Avis \cap \Ag(r_0)^{\mathrm{c}})} \!\!\left( 1\!-\!\E_{h_n}\!\!\left[e^{-s \Pt \Gn h_n \ell(\mathbf{x}_n)}\right] \right) d\mathbf{x}_n \bigg) \nonumber\\
        &\mathop=\limits^{(c)} \exp \Biggl(-\lambda \!\!\int_{\mathbf{x}_n \in \bar{\BPPg}(\Avis \cap \Ag(r_0)^{\mathrm{c}})} \left( 1- \frac{1}{\left(\frac{s r^{-\alpha}}{m \omega}+1\right)^{m}} \right) d\mathbf{x}_n \Biggr) \nonumber\\
        &\mathop=\limits^{(d)} \exp\left(-\frac{2N}{\pi} \int_{r_0}^{\romax} \left( 1- \frac{1}{\left(\frac{s r^{-\alpha}}{m \omega}+1\right)^{m}} \right)  \frac{r dr}{\sqrt{v_1-\left(v_2-r^2\right)^2}} \right) \nonumber\\
        &\mathop= \exp\vast(-\frac{2N}{\pi} \underbrace{\int_{r_0}^{\romax}  \frac{r dr}{\sqrt{v_1-\left(v_2-r^2\right)^2}}}_{\Omega_1(\romax)-\Omega_1(r_0)} \nonumber\\
        &\quad\qquad+ \frac{2N}{\pi} \underbrace{\int_{r_0}^{\romax} \frac{1}{\left(\frac{s r^{-\alpha}}{m \omega}+1\right)^{m}}   \frac{r dr}{\sqrt{v_1-\left(v_2-r^2\right)^2}}}_{\Omega_2(s,r_0)} \vast) 
    \end{align}
    where ($a$) follows from the independence of channel gains $h_n$,
    ($b$) follows from the Campbell’s theorem for the PPP $\bar{\BPPg}$, ($c$) follows from the Laplace transform $\E[e^{-s h_n}]=\mathcal{L}_{h_n}(s)=\left(\frac{m}{s+m}\right)^m$, and ($d$) comes from 
    \begin{align}
        \frac{d|\Ag(r)|}{dr}
        &=\frac{4r(\re+\ag)}{\sqrt{v_1-\left(v_2-r^2\right)^2}}.
    \end{align}
     Using the derivative of $\Omega_1(r)$, $\frac{d}{dr}\Omega_1(r)=\frac{r}{\sqrt{v_1-\left(v_2-r^2\right)^2}}$, and the definition of $\Omega_2(s,r_0)$ 
     given in Lemma \ref{lem:LI_approx},
     the proof is complete.

\ifCLASSOPTIONcaptionsoff
  \newpage
\fi


\begin{thebibliography}{1}
\bibitem{bib:3GPP_38.811}\label{Ref:3GPP_38.811}
3GPP TR 38.811 v15.4.0, ``Study on NR to support non-terrestrial networks," Sep. 2020.

\bibitem{bib:3GPP_38.821}\label{Ref:3GPP_38.821}
3GPP TR 38.821 v16.0.0, ``Solutions for NR to support non-terrestrial networks (NTN)," Dec. 2019.



\bibitem{bib:Su-22}\label{Ref:Su-22}
Y. Su, Y. Liu, Y. Zhou, J. Yuan, H. Cao, and J. Shi, ``Broadband LEO satellite communications: Architectures and key technologies," \emph{IEEE Wireless Commun.}, vol. 26, no. 2, pp. 55-61, Apr. 2019.

\bibitem{bib:Okati1}\label{Ref:Okati1}
N. Okati, T. Riihonen, D. Korpi, I. Angervuori, and R. Wichman, ``Downlink coverage and rate analysis of low Earth orbit satellite constellations using stochastic geometry," \emph{IEEE Trans. Commun.}, vol. 68, no. 8, pp. 5120-5134, Aug. 2020.

\bibitem{bib:Talgat1}\label{Ref:Talgat1}
A. Talgat, M. A. Kishk, and M.-S. Alouini, ``Nearest neighbor and contact distance distribution for binomial point process on spherical surfaces," \emph{IEEE Commun. Lett.}, vol. 24, no. 12, pp. 2659-2663, Dec. 2020.

\bibitem{bib:Talgat2}\label{Ref:Talgat2}
——, ``Stochastic geometry-based analysis of LEO satellite communication systems," \emph{IEEE Commun. Lett.}, vol. 25, no. 8, pp. 2458-2462, Aug. 2021.

\bibitem{bib:Jung1}\label{Ref:Jung1}
D.-H. Jung, G. Im, J.-G. Ryu, S. Park, H. Yu, and J. Choi, ``Satellite clustering for non-terrestrial networks: Concept, architectures, and applications," \emph{IEEE Veh. Technol. Mag.}, vol. 18, no. 3, pp. 29-37, Sep. 2023.


\bibitem{bib:Jung2}\label{Ref:Jung2}
D.-H. Jung, J.-G. Ryu, W.-J. Byun, and J. Choi, ``Performance analysis of satellite communication system under the shadowed-Rician fading: A stochastic geometry approach," \emph{IEEE Trans. Commun.}, vol. 70, no. 4, pp. 2707-2721, Apr. 2022.

\bibitem{bib:Al-Hourani1}\label{Ref:Al-Hourani1}
A. Al-Hourani, ``An analytic approach for modeling the coverage performance of dense satellite networks," \emph{IEEE Wireless Commun. Lett.}, vol. 10, no. 4, pp. 897-901, Apr. 2021.

\bibitem{bib:Al-Hourani2}\label{Ref:Al-Hourani2}
——, ``Optimal satellite constellation altitude for maximal coverage," \emph{IEEE Wireless Commun. Lett.}, vol. 10, no. 7, pp. 1444-1448, July 2021.

\bibitem{bib:Okati2}\label{Ref:Okati2}
N. Okati and T. Riihonen, ``Nonhomogeneous stochastic geometry analysis of massive LEO communication constellations," \emph{IEEE Trans. Commun.}, vol. 70, no. 3, pp. 1848-1860, Mar. 2022.

\bibitem{bib:Park}\label{Ref:Park}
J. Park, J. Choi, and N. Lee, ``A tractable approach to coverage analysis in downlink satellite networks," \emph{IEEE Trans. Wireless Commun.}, vol. 22, no. 2, pp. 793-807, Feb. 2023.

\bibitem{bib:Abdu}\label{Ref:Abdu}
T. S. Abdu, S. Kisseleff, E. Lagunas, S. Chatzinotas, and B. Ottersten, ``Joint carrier allocation and precoding optimization for interference-limited GEO satellite," in \emph{Proc. 39th International Communications Satellite Systems Conference (ICSSC 2022)}, 2022, pp. 128-132.

\bibitem{bib:Abdu2}\label{Ref:Abdu2} 
T. S. Abdu, S. Kisseleff, E. Lagunas, and S. Chatzinotas, ``Flexible resource optimization for GEO multibeam satellite communication system," \emph{IEEE Trans. Wireless Commun.}, vol. 20, no. 12, pp. 7888-7902, Dec. 2021.

\bibitem{bib:Jia}\label{Ref:Jia}  
M. Jia, X. Liu, X. Gu, and Q. Guo, ``Joint cooperative spectrum sensing and channel selection optimization for satellite communication systems based on cognitive radio," \emph{Int. J. Satell. Commun. Network.}, vol. 35, no. 2, pp. 139-150, Dec. 2015.


\bibitem{bib:Khan}\label{Ref:Khan} 
W. U. Khan, E. Lagunas, A. Mahmood, B. M. ElHalawany, S. Chatzinotas, and B. Ottersten, ``When RIS meets GEO satellite communications: A new sustainable optimization framework in 6G," in \emph{Proc. IEEE 95th Veh. Technol. Conference: (VTC-Spring)}, Jun. 2022, pp. 1-6.


\bibitem{bib:CS.Park}\label{Ref:CS.Park}
C.-S. Park, C.-G. Kang, Y.-S. Choi, and C.-H. Oh, ``Interference analysis of geostationary satellite networks in the presence of moving non-geostationary satellites," in \emph{Proc. 2nd Int. Conf. Inf. Technol. Converg. Services}, 2010, pp. 1-5.

\bibitem{bib:Fortes}\label{Ref:Fortes} 
J. M. P. Fortes, R. Sampaio-Neto, and J. E. A. Maldonado, ``An analytical method for assessing interference in interference environments involving NGSO satellite networks," \emph{Int. J. Satell. Commun.}, vol 17., no. 6, pp. 399-419, Dec. 1999.

\bibitem{bib:SGP4}\label{Ref:SGP4} 
D. Vallado and P. Crawford, ``SGP4 orbit determination," in \emph{Proc. AIAA/AAS Astrodynamics Specialist Conference and Exhibit}, Aug. 2008, pp. 6770-6799.

\bibitem{bib:Jung3}\label{Ref:Jung3}
D.-H. Jung, J.-G. Ryu, and J. Choi, ``Satellite clusters flying in formation: Orbital configuration-dependent performance analyses," doi: arXiv:2305.01955.

\bibitem{bib:Chiu}\label{Ref:Book:Chiu}
S. N. Chiu, D. Stoyan, W. S. Kendall, and J. Mecke, \emph{Stochastic Geometry and Its Applications}, 3nd ed. New York, NY: Wiley, 2013.

\bibitem{bib:Jung4}\label{Ref:Jung4}
D.-H. Jung, J.-G. Ryu, and J. Choi, ``When satellites work as eavesdroppers," \emph{IEEE Trans. Inf. Forensics Security}, vol. 17, pp. 2784-2799, 2022.

\bibitem{bib:Andrews}\label{Ref:Andrews}
J. G. Andrews, T. Bai, M. N. Kulkarni, A. Alkhateeb, A. K. Gupta, and R. W. Heath, ``Modeling and analyzing millimeter wave cellular systems," \emph{IEEE Trans. Commun.}, vol. 65, no. 1, pp. 403-430, Jan. 2017.

\end{thebibliography}
\end{document}